\RequirePackage{etoolbox}

\documentclass[ba]{imsart}
%

%
\usepackage{float}
\usepackage{amsmath}
\usepackage{xcolor}
\usepackage{amsmath,amssymb,amsthm,textcomp}
\usepackage{natbib}
\usepackage[colorlinks,citecolor=blue,urlcolor=blue,filecolor=blue,backref=page]{hyperref}
\usepackage{graphicx}
\usepackage{multirow}
\usepackage{dsfont}

\startlocaldefs
\numberwithin{equation}{section}
\theoremstyle{plain}
\newtheorem{thm}{Theorem}[section]
\newtheorem{lem}{Lemma}[section]
\newtheorem{rem}{Remark}[section]
\endlocaldefs

\newcommand{\Be}{\textsf{Beta}}

\newcommand{\N}{\textsf{N}}
\newcommand{\Ber}{\textsf{Ber}}

\newcommand{\Gam}{\textsf{Gamma}}

\newcommand{\Un}{\textsf{Unif}}

\newcommand{\bbR}{\mathbb{R}}

\newcommand{\1}{\mathbf{1}_n}

\newcommand{\simiid}{\ensuremath{\mathrel{\mathop{\sim}\limits^{\rm
iid}}}}

\newcommand{\I}{\mathbf{I}}
\newcommand{\Y}{\mathbf{Y}}

\def\X{\mathbf{X}}

\def\x{\mathbf{x}}

\def\y{\mathbf{y}}

\def\Z{\mathbf{Z}}

\newcommand{\C}{\mathbf{C}}
\def\b{\boldsymbol{\beta}}

\def\g{\boldsymbol{\gamma}}

\def\d{\text{d}}
\def\Sigmab{\boldsymbol{\Sigma}}

\def\0{\noindent \textbf{0}}
\def\1{\noindent \textbf{1}}
\def\2{\noindent \textbf{2}}
\def\3{\noindent \textbf{3}}
\def\4{\noindent \textbf{4}}
\def\5{\noindent \textbf{5}}
\def\6{\noindent \textbf{6}}
\def\7{\noindent \textbf{7}}
\def\8{\noindent \textbf{8}}
\def\9{\noindent \textbf{9}}
\newcommand{\tog}{\textrm{tog}}
\newcommand{\eqdef}{=\joinrel=}

\begin{document}

\begin{frontmatter}
\title{Paired-move multiple-try stochastic search for Bayesian variable selection
}
\runtitle{}

\begin{aug}
\author{\fnms{Xu} \snm{Chen}\thanksref{addr1}\ead[label=e1]{xu.chen2@duke.edu}},
\author{\fnms{Shaan} \snm{Qamar}\thanksref{addr2}\ead[label=e2]{siqamar@gmail.com}}
\and
\author{\fnms{Surya} \snm{Tokdar}\thanksref{addr1}
\ead[label=e3]{tokdar@stat.duke.edu}
\ead[label=u1,url]{http://www.foo.com}}


\address[addr1]{Department of Statistical Science, Duke University, Durham, NC
    \printead{e1} 
    \printead*{e3}
}

\address[addr2]{Google Inc., Mountain View, CA
    \printead{e2}
}    


\end{aug}

\begin{abstract}
Variable selection is a key issue when analyzing high-dimensional data. The explosion of data with large sample sizes and dimensionality brings new challenges to this problem in both inference accuracy and computational complexity. To alleviate these problems, we propose a new scalable Markov chain Monte Carlo (MCMC) sampling algorithm for ``large $p$ small $n$'' scenarios by generalizing multiple-try Metropolis to discrete model spaces and further incorporating neighborhood-based stochastic search. The proof of reversibility of the proposed MCMC algorithm is provided. Extensive simulation studies are performed to examine the efficiency of the new algorithm compared with existing methods. A real data example is provided to illustrate the prediction performances of the new algorithm.
\end{abstract}


\begin{keyword}
\kwd{multiple-try Metropolis}
\kwd{stochastic search}
\kwd{high dimensionality}
\kwd{parallel computing}
\kwd{Gaussian linear models}
\kwd{variable selection}
\kwd{Bayesian model averaging}
\kwd{Markov chain Monte Carlo}
\end{keyword}

\end{frontmatter}

\section{Introduction}
The failure of maximum likelihood estimation in the high-dimensional $p>n$ setting naturally gives rise to the variable selection task. When predictors are known to have lower dimensional structure or it is known that only a small subset of predictors are predictive of the response, exploiting such structure can lead to dramatic improvements in the statistical efficiency of the learning  algorithm. Classic  stepwise procedures based on likelihood ratio tests for nested models or penalized model scores (e.g., Akaike's information criterion (AIC) or the Bayesian information criterion (BIC)) are generally unreliable in high dimensions. Modern Bayesian approaches to variable selection in the regression setting are typically divided into two groups: exact predictor inclusion-exclusion via spike-and-slab priors, and continuous shrinkage priors which mimic the former. With the number of possible models growing as $2^p$, direct enumeration of all models is intractable for $p\geqslant30$. While the former comes equipped with natural measures of uncertainty, such as the posterior probability of each visited model and marginal predictor inclusion probabilities, the latter often leads to more tractable inferential procedures in terms of posterior computation.

Variable selection has received a tremendous amount of attention in frequentist literature, with numerous regularization approaches enjoying much success. Most well known methods, including the \textit{Lasso} [\cite{tib1996}], \textit{SCAD} [\cite{fan2001}], \textit{adaptive lasso} [\cite{zou2006}], and the \textit{Dantzig selector} [\cite{candes2007}] are one-stage procedures, focusing on simultaneous selection and estimation of unknown model parameters; in fact, many of these come with  appealing oracle properties and asymptotic guarantees. There is an equally overwhelming body of work in the Bayesian variable selection and model averaging literature  dating back to Zellner's $g$-prior [\cite{zellner1986}]. Since then, a populous set of shrinkage priors have been developed along similar lines. Recent and notable among these include the Bayesian Lasso [\cite{park2008}], Horseshoe [\cite{polson2012}], Generalized Double Pareto \cite{armagan2013}, and Dirichlet-Laplace [\cite{bhattacharya2015}]. However, proponents of two-stages procedures, \textit{SIS} [\cite{fan2010}] and \textit{VANISH} [\cite{radchenko2010}] for example, argue that simultaneous selection and estimation is often too ambitious, instead proposing efficient variable screening algorithms which promise retaining the true support in the generalized linear model setting with high probability under regularity conditions on the design matrix. Projection pursuit regression [\cite{friedman1981}], likelihood basis pursuit [\cite{zhang2004}], and the leaps-and-bounds algorithm [\cite{furnival2000}; \cite{hoeting1999}; \cite{brusco2011}] are classic approaches to selection that utilize various optimization methods including tabu search and the branch-and-bound algorithm. Relying on penalized likelihood scoring, these methods can be effective model selection tools in simple model settings but offer no uncertainty quantification.

The spike-and-slab approach to variable selection has been predominantly developed in the linear regression setting, largely due to analytical tractability [\cite{george1993}; \cite{geweke1996}; \cite{draper1995}; \cite{carlin1995}; \cite{clyde1996}; \cite{hoeting1999}]. Here, analytical expressions for the marginal likelihood enable efficient stochastic search over the model space. The $\textrm{MC}^3$ algorithm [\cite{raftery1997}] and stochastic search variable selection (\textit{SSVS}) [\cite{george1993}] are two early Markov chain samplers that enable variable selection. \textit{SSVS} traverses the model space by scanning over the $p$-variates successively, allowing each predictor to have its state flipped, confining the search for important predictors to a local neighborhood of size $p$ at every MCMC iteration. Sequential scanning is conceptually simple, but tends to be slow as the predictor dimension grows and can suffer from mixing problems in correlated predictor settings. To mitigate this computational problem, \cite{rovckova2014} adopts EM algorithm to deterministically move toward the posterior modes instead of stochastic search. Several other stochastic search procedures have been proposed in various contexts, including applications to Gaussian graphical models and social networks [\cite{jones2005}; \cite{scott2008}], with a focus on enumerating models having high posterior probability. The authors argue that in the enormous model space, the Metropolis criterion is ``less useful as an MCMC transition kernel, and far more useful as a search heuristic for finding and cataloguing good models,'' and reliable computation of model probabilities based on frequency of occurrence in a Monte Carlo seems dubious. Shotgun stochastic search (\textit{SSS}) \cite{hans2007} proposes a neighborhood search procedure to quickly identify inclusion vectors with large posterior mass in high dimensions, and is demonstrated to perform well in linear regression and graphical model setting of moderate dimension. \cite{clyde1996} and \cite{clyde2004} discuss various Bayesian variable selection strategies for model averaging, taking advantage of specific model structure such as orthogonalized design to obtain closed-form posterior model probabilities. In addition, \cite{clyde2011} propose Bayesian adaptive sampling (\textit{BAS}) to sequentially learn marginal inclusion probabilities using a without replacement sampling algorithm. \textit{BAS} improves best over the baseline MCMC competitors when the algorithm has access to warm starts for the marginal predictor inclusion probabilities. \cite{berger2005} propose a stochastic search algorithm that incorporates local proposals which explore a neighborhood around a catalogue of previously sampled models using initial estimates for posterior model and predictor inclusion probabilities to guide the traversal. Using a path-sampling approach to efficiently compute Bayes factors between one-away pairs of models in the linear regression setting, their strategy yields a connected graph between all explored models with the hope that this set is large enough to reliably estimate approximate inferential statistics.

Acknowledging the tension between local efficiency and mode-finding while maintaining MCMC reversibility, we adapt concepts from neighborhood-based stochastic search \cite{hans2007} and generalize the multiple-try Metropolis (MTM) \cite{liu2000} algorithm for efficient sampling of inclusion vectors. Key innovations address the challenges faced by variable selection sampling in high dimensions; in particular, a scalable MCMC sampler should
\begin{enumerate}
\item effectively trade-off exploration and exploitation: in high dimensions, MCMC should employ a mechanism for adaptation as means of efficiently discovering regions of high posterior probability;
\item have an efficient transitioning scheme: poor mixing can result when good proposals are identified but rejected because of the reversibility constraint; and
\item cut the computational budget: when likelihood evaluations are expensive, rejections are wasteful. A flexible sampling scheme will allow for a host of proposals, allowing for an annealing process toward modal regions of the posterior.
\end{enumerate}
The rest of the paper is organized as follows. In Section \ref{sec2}, we start by establishing notations and presenting the hierarchical formulation for Bayesian variable selection with conjugate priors for regression coefficients and the predictor inclusion vectors which are adopted in simulation studies. Then we briefly review the shotgun stochastic search and multiple-try Metropolis algorithms. We propose a new scalable MCMC sampler for predictor inclusion vectors by generalizing the multiple-try Metropolis and combining with neighborhood-based stochastic search in Section \ref{sec3}. Extensive simulation studies are provided in Section \ref{sec4} to examine the effectiveness of the proposed algorithm according to inference accuracy, prediction performances and computational efficiency. We conclude in Section \ref{sec5} with discussions on the future research directions.
\section{Background}
\label{sec2}
\subsection{Bayesian variable selection}
Consider the canonical Gaussian linear regression model
\[\Y=\X\b+\mathbf{\varepsilon},\qquad\mathbf{\varepsilon}\sim \N(0,\I_n/\phi)\]
where $\Y\in \bbR^n$ is a response vector and $\X\in \mathbb{R}^{n\times p}$ is a design matrix for $n$ samples and $p$ predictors. Assume $\Y$ and $\X$ are standardized and hence an intercept is not included. $\b\in \mathbb{R}^p$ is an unknown regression coefficient vector. Accurately recovering the support of and estimating $\b$ are of interest when $p$ is large. 

Under Bayesian scheme, variable selection is typically performed by introducing a $p\times 1$ binary latent indicator vector $\g\in \{0,1\}^p$. Denote the set of indices of predictors $\{1,2,...,p\}$ as $[p]$. For each $i\in [p]$, $\gamma_i=1$ if $\X_i$ is included in the model. $\g$ can also be viewed as the set of indices of active predictors (i.e., a subset of $[p]$) in the affiliated model $\mathcal{M}_{\g}$ [\cite{yang2015}] where $|\g|$ and $\g^c$ denote the cardinality and complement of $\g$. Under $\mathcal{M}_{\g}$, a conjugate hierarchical model is typically constructed as follows:
\begin{align}
\b_{\g}\mid\g,\phi\sim \N(\0, \Sigmab_{\g}/\phi)
\end{align}
An independent prior is obtained by specifying $\Sigmab_{\g}=\I_{|\g|}$. Another conventional choice is a $g$-prior where $\Sigmab_{\g}=g(\X_{\g}^T\X_{\g})^{-1}$ [\cite{zellner1986}]. This type of prior preserves the correlation structure of a design matrix and leads to simple closed-form marginal likelihoods. Models with small sizes are preferred when larger values of $g$ are adopted. See \cite{liang2012} for a detailed discussion of the effects of $g$.
\begin{align}
\phi\sim \Gam(a, b)
\end{align}
Generally, $a$ and $b$ are chosen to be small constants, resulting in a non-informative prior for $\phi$. However, $\phi$ is expected to be larger when including more predictors in the model. Therefore, \cite{george1993} and \cite{dobra2004} consider relating $a$ or $b$ to $|\g|$. When $a, b\rightarrow 0$, we get a popular improper prior $\pi(\phi)\propto 1/\phi$.
\begin{align}
\pi(\g\mid \tau)=\prod_{j=1}^p\tau^{\gamma_j}(1-\tau)^{1-\gamma_j}=\tau^{|\g|}(1-\tau)^{p-|\g|}
\end{align}
The prior for $\g$ only depends on its size. Fixing $\tau=1/2$ yields a uniform distribution for all $2^p$ models with expected model size of $p/2$. This prior fails to penalize large models. A more reasonable approach is to treat $\tau$ as a hyperparameter with Beta prior. See \cite{scott2010} for theoretical properties of this prior. 
\begin{align}
\tau\sim \Be(u,v)
\end{align}
Let $d^*$ be the number of expected model size. We may set $u=d^*$ and $v=p-d^*$ resulting in $\mathbf{E}[\tau]=d^*$ and $\mathbf{Var}[\tau]\approx d^*/p^2$ when $d^*=o(p)$. A marginal Beta-binomial distribution for $\g$ is 
\begin{align}
\pi(\g)=\frac{B(|\g|+u, p-|\g|+v)}{B(u, v)}
\end{align} 
where $B(\cdot,\cdot)$ is the Beta function.

All the simulations performed in this paper adopt $g$-prior with $g=n$ (i.e., Unit information prior [\cite{kass1995}]), $\pi(\phi)\propto 1/\phi$ and a Beta-binomial prior for $\g$. Under these settings, the marginal likelihood is given by
\begin{align}
\mathcal{L}_n(\Y\mid \g)&=\int \pi(\Y \mid \b_{\g},\phi)\pi(\b_{\g}\mid\phi,\g)\pi(\phi)\d\b_{\g}\d\phi\\
&=\frac{\Gamma(n/2)(1+g)^{n/2}}{\pi^{n/2}\|\Y\|_2^n}\frac{(1+g)^{-|\g|/2}}{[1+g(1-R_{\g}^2)]^{n/2}}
\end{align}
where $\Gamma(\cdot)$ is the Gamma function and $R_{\g}^2$ is the ordinary coefficient of determination for the model $\mathcal{M}_{\g}$
\begin{align}
R_{\g}^2=\frac{\Y^T\mathbf{P}_{\X_{\g}}\Y}{\|\Y\|_2^2}
\end{align}
with $\mathbf{P}_{\X_{\g}}=\X_{\g}(\X_{\g}^T\X_{\g})^{-1}\X_{\g}^T$ the projection matrix onto the column space of $\X_{\g}$.
\subsection{Neighborhood-based stochastic search MCMC samplers}
Let $T(\g,\cdot)$ be a proposal transition function over $N(\g)$, the neighborhood set of $\g$. Here $T(\g,\g')>0\Longleftrightarrow T(\g',\g)>0$ is required to guarantee reversibility. Then a Metropolis-Hastings (MH) random walk neighborhood search algorithm is implemented iteratively as follows:
\begin{enumerate}
\item Randomly select a proposal state $\g'\in N(\g)$ according to $T(\g,\cdot)$. 
\item Accept proposal $\g'$ with probability $\alpha$ where
\[\alpha(\g,\g')=\min\bigg\{1,\frac{\pi(\g'\mid\Y)T(\g',\g)}{\pi(\g\mid\Y)T(\g,\g')}\bigg\}\]
otherwise stay at $\g$.
\end{enumerate}
This algorithm generates an irreducible, aperiodic, and positive recurrent Markov chain. Let $1_j$ be a $p\times 1$ vector with $j^{th}$ element 1 and others 0. Then neighborhood set considered by \cite{hans2007} consists of three types of moves:
\begin{enumerate}
\item Add an inactive predictor: $N_A(\g)=\{\g'\mid \g'=\g+1_j, j\in\g^c\}$
\item Remove an active predictor: $N_R(\g)=\{\g'\mid \g'=\g-1_j, j\in\g\}$
\item Swap an active predictor with an inactive predictor: $N_S(\g)=\{\g'\mid \g'=\g-1_j+1_k, (j,k)\in\g\times\g^c\}$
\end{enumerate}
and $N(\g)=N_A(\g)\cup N_R(\g)\cup N_S(\g)$. Swap move is actually the combination of adding and removing. \cite{yang2015} further unifies 1 and 2 into one class based on Hamming distance. 

A MCMC sampler built on SSS provided in \cite{hans2007} can be obtained immediately by defining 
\begin{align}
T(\g,\g')=\frac{S(\g')}{\sum_{\tilde{\g}\in N(\g)}S(\tilde{\g})}
\label{equ2.9}
\end{align} 
where $S$ is any positive score function.
\subsection{Multiple-try Metropolis}
Multiple-try Metropolis algorithm is proposed by \cite{liu2000} to mitigate the potential slow convergence problem of traditional MH algorithms. Instead of only considering a single proposal, MTM proposes multiple trials each iteration to prevent the chain from being stuck in local modes in a continuous state space. Specifically, suppose $\pi$ is the target distribution and $T$ is a transition kernel. Further define weight $\omega(\x,\y)=\pi(\x)T(\x,\y)$. Then a general MTM algorithm involve the following procedures:
\begin{enumerate}
\item Sample $M$ i.i.d. proposals $\y_1,\y_2,...,\y_M$ according to $T(\x,)$. 
\item Select $\y\in\{\y_1,\y_2,...,\y_M\}$ with probability proportional to $\omega(\y_j,\x)\quad j=1,2,...,M$. 
\item Sample backward set $\{\x_1^*,\x_2^*,...,\x_{M-1}^*\}$ according to $T(\y,)$ and set $\x_M^*=\x$.
\item Accept the proposal $\y$ with probability $\alpha$ where
\[\alpha(\x,\y)=\min\bigg\{1,\frac{\sum_{j=1}^M \omega(\y_j,\x)}{\sum_{j=1}^M \omega(\x_j^*,\y)}\bigg\}\]
otherwise stay at $\x$.
\end{enumerate}
This algorithm generates a reversible Markov chain leaving $\pi$ as the invariant distribution. The standard MH sampler results as a special case when $M=1$. In \cite{liu2000}, MTM is demonstrated to be more efficient on multimodal state space exploration than traditional MH algorithms through simulation studies. \cite{pandolfi2010} extends this approach by further incorporating an additional weight $\omega^*(\x,\y)$. The original MTM is obtained when $\omega^*(\x,\y)=\omega(\x,\y)$.
\section{A paired-move multiple-try stochastic search sampler}
\label{sec3}
The MCMC sampler built on SSS may suffer from two problems. First, the chain may be stuck due to substantially low acceptance rates. Suppose that the current state is $\g$ and $\g'$ is proposed. When $\g'$ has better neighborhoods than $\g$, $\sum_{\tilde{\g}\in N(\g')}S(\tilde{\g})$ is much larger than $\sum_{\tilde{\g}\in N(\g)}S(\tilde{\g})$ leading to a small acceptance rate. Therefore, the sampler may be able to identify inclusion vectors with high posterior probabilities but fail to transition to them. Another concern is computational complexity. We notice that $|\g|+|\g'|$ remove, $2p-|\g|-|\g'|$ add, and $|\g|(p-|\g|)+|\g'|(p-|\g'|)$ swap neighborhoods are evaluated in each iteration. This $O(p)$ cost is further exacerbated when $n$ is large. Although likelihood scores can be evaluated in parallel, most one-away neighborhoods offer little improvement to the model fit in high dimensions. Reducing the lengtha of a chain is inevitable when computational budget is limited, resulting in poor mixing and unreliable inferences. 

We propose a new MCMC sampler by combining the idea of neighborhood-based stochastic search and MTM to address the issues described above. Specifically, a paired-move strategy is introduced in Section \ref{3.1} to improve acceptance rates. In Section \ref{3.2}, multiple-try scheme is generalized to discrete model spaces to allow for a flexible and efficient neighborhood search. We further incorporate adaptive scores for predictors according to the correlation structure of a design matrix and previous posterior samples to improve mixing in Section \ref{3.3}.
\subsection{Paired-move neighborhood search}
\label{3.1}
The paired-move strategy is motivated by the following fact:
\begin{align}
\g'\in N_A(\g)&\Longleftrightarrow \g\in N_R(\g')\nonumber\\ 
\g'\in N_R(\g)&\Longleftrightarrow \g\in N_A(\g')\\
\g'\in N_S(\g)&\Longleftrightarrow \g\in N_S(\g')\nonumber
\end{align}
Therefore, a forward move $\g\rightarrow\g'$ and a corresponding backward move $\g'\rightarrow\g$ are paired. We proposed a paired-move reversible neighborhood sampler (\textit{pRNS}) with ``add-remove'', ``remove-add'', and ``swap-swap'' forward-backward neighborhoods. By allowing different moves to be proposed separately and in efficient succession, \textit{pRNS} can dramatically improve mixing in the space of single predictor changes to $\g$. The \textit{pRNS} proposal transition function is defined by
\begin{align}
T(\g,\g')=w_AT_A(\g,\g')+w_RT_R(\g,\g')+w_ST_S(\g,\g')
\label{move-type}
\end{align}
where $w_A$, $w_R$, and $w_S$ are probabilities of proposing add, remove, and swap moves respectively and $T_A$, $T_R$, and $T_S$ are proposal transition functions as in \ref{equ2.9} restricted to their corresponding sets of neighborhoods. 

Naturally, $w_A$, $w_R$, and $w_S$ are positive and sum to 1. These probabilities are allowed to vary with the current model size to encourage additions to smaller models, removal from larger models, and swaps for moderately sized ones. As a general rule of configurations, $w_A(|\g|)$ can be specified to be monotone decreasing with respect to $|\g|$ with $w_A(0)=w_R(p)=1$ and $w_S(|\g|)>0$ when $0<|\g|<p$. Moreover, we recommend adopting a unimodal $w_S(|\g|)$ with a mode near $d^*$ and a ``light tail'': $\sum_{d^*<|\g|<p}w_S(|\g|)<\delta$ where $\delta=0.1$, for example. Note that when $d^*=o(p)$, random-walk Gibbs samplers are heavily biased toward attempting adding additional predictors instead of removing undesirable ones. The inefficiency of random selection is addressed by utilizing the suggested rules. For simplicity, the following settings are adopted in all simulations in the paper:
\begin{align}
w_A(0)=w_R(p)=1 \quad\textrm{ and }\quad w_A(|\g|)=w_R(|\g|)=w_S(|\g|)=\frac{1}{3}\hspace*{1.8mm}\textrm{if $0<|\g|<p$}
\end{align}
The resulting MCMC algorithm adopting \textit{pRNS} is as follows:
\begin{enumerate}
\item Select move $m\in\{A,R,S\}$ with probabilities $w_A,w_R$, and $w_S$.
\item Construct the forward set of neighborhoods $N_m(\g)$.
\item Randomly select a proposal state $\g'\in N_m(\g)$ according to $T_m(\g,\cdot)$.
\item Construct the backward set of neighborhoods $N_m'(\g')$ where $m'\in\{R,A,S\}$ is the backward move correspond to $m$.
\item Accept the proposal $\g'$ with probability $\alpha$ where
\begin{align}
\alpha(\g,\g')=\min\bigg\{1,\frac{\pi(\g'\mid\Y)[w_{m'}(|\g'|)T_{m'}(\g',\g)]}{\pi(\g\mid\Y)[w_{m}(|\g|)T_{m}(\g,\g')]}\bigg\}
\label{acc-rej-pRNS}
\end{align}
otherwise stay at $\g$.
\end{enumerate}
\begin{lem}
The paired-move reversible neighborhood sampler (\textit{pRNS}) with acceptance probability \ref{acc-rej-pRNS} satisfies the detailed balance condition leaving the desired target distribution $\pi(\g\mid\Y)$ invariant. 
\end{lem}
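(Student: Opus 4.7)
My plan is to verify the detailed balance condition $\pi(\g\mid\Y)K(\g,\g')=\pi(\g'\mid\Y)K(\g',\g)$ for the Markov transition kernel $K$ induced by \textit{pRNS}; on the finite state space $\{0,1\}^p$ this delivers the stated $\pi(\g\mid\Y)$-invariance directly. The identity is trivial when $\g=\g'$ since both sides coincide, so I would focus entirely on the off-diagonal case $\g\neq\g'$.

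The first step is to use the pairing identities displayed at the start of Section \ref{3.1} together with the observation that the add, remove, and swap moves change $|\g|$ by $+1$, $-1$, and $0$ respectively. This means that for any ordered pair $(\g,\g')$ of distinct states there is at most one move type $m\in\{A,R,S\}$ with $T_m(\g,\g')>0$, and by those identities the corresponding backward move $m'$ (namely $A\leftrightarrow R$ and $S\leftrightarrow S$) is the unique move type with $T_{m'}(\g',\g)>0$. The apparent sum over move types in $K$ therefore collapses to a single term, giving $K(\g,\g')=w_m(|\g|)\,T_m(\g,\g')\,\alpha(\g,\g')$ and symmetrically $K(\g',\g)=w_{m'}(|\g'|)\,T_{m'}(\g',\g)\,\alpha(\g',\g)$.

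The second step is the standard Metropolis--Hastings manipulation. Writing $q(\g,\g')\eqdef w_m(|\g|)T_m(\g,\g')$ and $q(\g',\g)\eqdef w_{m'}(|\g'|)T_{m'}(\g',\g)$, the acceptance probability in (\ref{acc-rej-pRNS}) is exactly $\alpha(\g,\g')=\min\{1,[\pi(\g'\mid\Y)q(\g',\g)]/[\pi(\g\mid\Y)q(\g,\g')]\}$. The elementary identity $a\min(1,b/a)=b\min(1,a/b)$ applied with $a=\pi(\g\mid\Y)q(\g,\g')$ and $b=\pi(\g'\mid\Y)q(\g',\g)$ then yields detailed balance at once. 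The degenerate case $q(\g,\g')=0$, which by the pairing forces $q(\g',\g)=0$ as well, makes both sides vanish and needs no separate argument.

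The only real conceptual obstacle is the bookkeeping around the paired-move structure: one must recognize that the asymmetry $w_m(|\g|)\neq w_{m'}(|\g'|)$ between the forward and backward move-selection probabilities is precisely what the $w_{m'}(|\g'|)/w_m(|\g|)$ factor inserted into (\ref{acc-rej-pRNS}) has been designed to compensate for, and that the biconditionals in Section \ref{3.1} are what guarantee the three-way sum collapses to a single paired $(m,m')$ term. Once these two observations are in place, the acceptance probability is just the usual Hastings ratio for the effective proposal $q$, and invariance follows by the standard argument with no further care needed about how the weights $w_A$, $w_R$, $w_S$ depend on the current model size.
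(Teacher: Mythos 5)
Your proposal is correct and follows essentially the same route as the paper: both reduce $\pi(\g\mid\Y)K(\g,\g')$ to the symmetric quantity $\min\{\pi(\g\mid\Y)w_m(|\g|)T_m(\g,\g'),\,\pi(\g'\mid\Y)w_{m'}(|\g'|)T_{m'}(\g',\g)\}$ via the standard Metropolis--Hastings identity $a\min(1,b/a)=\min(a,b)$. Your explicit justification that the sum over move types collapses to a single paired $(m,m')$ term, and your handling of the diagonal and degenerate cases, are bookkeeping details the paper leaves implicit rather than a different argument.
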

\begin{proof}
Let $A(\g_1,\g_2)$ be the actual transition probability for moving from $\g_1$ to $\g_2$. Then, we have
\begin{align}
\pi(\g\mid\Y)A(\g,\g')&=\pi(\g\mid\Y)w_m(|\g|)T_m(\g,\g')\min\bigg\{1,\frac{\pi(\g'\mid\Y)[w_{m'}(|\g'|)T_{m'}(\g',\g)]}{\pi(\g\mid\Y)[w_{m}(|\g|)T_{m}(\g,\g')]}\bigg\}\nonumber\\
&=\min\left\{\pi(\g\mid\Y)w_m(|\g|)T_m(\g,\g'),\pi(\g'\mid\Y)[w_{m'}(|\g'|)T_{m'}(\g',\g)]\right\}
\label{db-pRNS}
\end{align}
Note that the expression \ref{db-pRNS} is symmetric in $\g$ and $\g'$ and hence $\pi(\g\mid\Y)A(\g,\g')=\pi(\g'\mid\Y)A(\g',\g)$ which is the detailed balance condition.
\end{proof}
\begin{rem}
Since the neighborhoods evaluated in each iteration are restricted to a subset of $N(|\g|)$, \textit{pRNS} efficiently reduces the computational cost, though add and swap neighborhoods remain $O(p)$. As the dimension of predictors $p$ grows, an additional mechanism is essential to limit the size of neighborhoods which is the main concern of Section \ref{3.2}.
\end{rem}
\subsection{A paired-move multiple-try stochastic search MCMC algorithm}
\label{3.2}
It is inefficient to evaluate a large number of neighborhoods in each iteration. A flexible computational cost is desired to accommodate for the requirement of inference accuracy and the computational budget. One attractiveness of the MTM is that the computational cost can be adjusted by tuning the number of trails $M$.
\subsubsection{A mixed discrete multiple-try sampler}
We adapt MTM to the discrete model space where transitions are confined to the neighborhoods of inclusion vector $\g$. Instead of considering all neighborhoods, we propose a general framework for generating a stochastic set of neighborhoods of the current state $\g$. To formulate our method, we first define the ``toggle function'' $\tog: [p]\times \{0,1\}^p\rightarrow \{0,1\}^p$ as follows:
\begin{align}
\tog(i,\g=(\gamma_1,\gamma_2,...,\gamma_i,...,\gamma_p))=(\gamma_1,\gamma_2,...,1-\gamma_i,...,\gamma_p)
\end{align}
Namely, if $i^{th}$ predictor is included(excluded) in the current state $\g$, then $\g'=\tog(i,\g)$ is a neighborhood removing(adding) $i^{th}$ predictor. Note that a swap move  between $j^{th}$ and $k^{th}$ predictors is $\tog(k,\tog(j,\g))$ (or $\tog(j,\tog(k,\g))$) for $\gamma_j+\gamma_k=1$. 

To introduce stochasticity, we further define $\eta_i\sim\Ber(\omega(\gamma_i,v_i))$ for $i\in[p]$ with a weight function $\omega: \{0,1\}\times \mathbb{R}^+\rightarrow [0,1]$ taking inputs $\gamma_i$ and a nonnegative predictor importance score $v_i$. For simplicity, we do not consider swap moves which will be handled in detail in the next section and focus on a mixed set of neighborhoods only containing add and remove neighborhoods for now. Under these settings, the forward set of neighborhoods of $\g$ is defined as $N_{mix}(\g)=\{\tog(i,\g)\mid\eta_i=1,i\in[p]\}$ and $T_{mix}(\g,\cdot)$ is a proposal transition function as in \ref{equ2.9} restricted to $N_{mix}(\g)$. An algorithm for this generalized discrete MTM (dMTM) over a model space is:
\begin{enumerate}
\item For current state $\g$ and $i\in[p]$, independently sample $\eta_i\sim \Ber(\omega(\gamma_i,v_i))$.
\item Form the forward mixed set of neighborhoods of $\g$: $N_{mix}(\g)=\{\tog(i,\g)\mid\eta_i=1,i\in[p]\}$.
\item Select $\g'=\tog(i^*,\g)\in N_{mix}(\g)$ according to $T_{mix}(\g,\cdot)$.
\item For the proposed state $\g'$ and $j\neq i^*\in[p]$, independently sample $\eta_j'\sim \Ber(\omega(\gamma_j',v_j))$ and set $\eta_{i^*}'=1$.
\item Form the backward mixed set of neighborhoods of $\g'$: $N_{mix}'(\g')=\{\tog(j,\g')\mid\eta_j'=1,j\in[p]\}$.
\item Accept the proposal $\g'$ with probability $\alpha$ where
\begin{align}
\alpha(\g,\g')=\min\bigg\{1,\frac{\pi(\g'\mid \Y)[\omega(\gamma_{i^*}',v_{i^*})T'_{mix}(\g',\g)]}{\pi(\g\mid \Y)[\omega(\gamma_{i^*},v_{i^*})T_{mix}(\g,\g')]}\bigg\}
\label{acc-rej-dMTM}
\end{align}
otherwise stay at $\g$.
\end{enumerate}
\begin{lem}
The discrete MTM (dMTM) algorithm with acceptance probability \ref{acc-rej-dMTM} satisfies the detailed balance condition leaving the desired target distribution $\pi(\g\mid\Y)$ invariant. 
\label{lem3.2}
\end{lem}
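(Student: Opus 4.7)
The proof parallels the structure of Lemma 3.1 but must average over the random forward and backward neighborhood sets. First, I would write the actual transition kernel $A(\g,\g')$ as a double sum over the realizations of the Bernoulli vectors $\eta$ generating $N_{mix}(\g)$ and $\eta'$ generating $N'_{mix}(\g')$, with $\g'=\tog(i^*,\g)$ for some $i^*\in[p]$. Only configurations with $\eta_{i^*}=1$ contribute to the forward proposal, and $\eta'_{i^*}$ is forced to $1$ in the backward step, so the sum ranges over $\{\eta:\eta_{i^*}=1\}\times\{\eta':\eta'_{i^*}=1\}$. Note that $T_{mix}(\g,\g')$ and $T'_{mix}(\g',\g)$ depend on the realizations of $\eta$ and $\eta'$ respectively through their denominators $\sum_{i:\eta_i=1}S(\tog(i,\g))$ and $\sum_{j:\eta'_j=1}S(\tog(j,\g'))$.

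Second, I would factor the Bernoulli mass function as $\P(\eta)=\omega(\gamma_{i^*},v_{i^*})\,W_{-i^*}(\g,\eta)$, where $W_{-i^*}(\g,\eta)=\prod_{j\neq i^*}\omega(\gamma_j,v_j)^{\eta_j}(1-\omega(\gamma_j,v_j))^{1-\eta_j}$, and similarly decompose $\P(\eta')$. The crucial observation is that $\gamma_j=\gamma'_j$ for every $j\neq i^*$ by the definition of $\tog$, so $W_{-i^*}(\g,\cdot)\equiv W_{-i^*}(\g',\cdot)$; hence the product $W_{-i^*}(\g,\eta)\,W_{-i^*}(\g',\eta')$ is symmetric under the joint swap $(\g,\eta)\leftrightarrow(\g',\eta')$. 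The extracted factors $\omega(\gamma_{i^*},v_{i^*})$ and $\omega(\gamma'_{i^*},v_{i^*})$ are precisely the weights appearing in the acceptance ratio \ref{acc-rej-dMTM}.

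Third, for each fixed pair $(\eta,\eta')$, applying the elementary identity $a\min(1,b/a)=\min(a,b)$ to the acceptance probability yields
\begin{align*}
&\pi(\g\mid\Y)\,\omega(\gamma_{i^*},v_{i^*})\,T_{mix}(\g,\g')\,\alpha(\g,\g')\\
&\qquad=\min\left\{\pi(\g\mid\Y)\omega(\gamma_{i^*},v_{i^*})T_{mix}(\g,\g'),\,\pi(\g'\mid\Y)\omega(\gamma'_{i^*},v_{i^*})T'_{mix}(\g',\g)\right\},
\end{align*}
which is manifestly invariant under $(\g,\eta)\leftrightarrow(\g',\eta')$. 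Multiplying by the symmetric factor $W_{-i^*}(\g,\eta)\,W_{-i^*}(\g',\eta')$ and summing over the free coordinates of $\eta$ and $\eta'$ gives $\pi(\g\mid\Y)A(\g,\g')=\pi(\g'\mid\Y)A(\g',\g)$, which is detailed balance.

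The main obstacle will be notational bookkeeping: the two nested layers of randomness (the Bernoulli neighborhood construction and the $S$-weighted proposal selection within it) must be carefully separated, and one must remember that the denominator defining $T_{mix}$ varies with the realization of $\eta$. Once this is set up cleanly, the argument reduces to the same $\min$-trick used in Lemma 3.1, and the role of the $\omega(\gamma_{i^*},v_{i^*})$ weight in the acceptance probability is exactly to compensate for the asymmetry between forward sampling, where $\eta_{i^*}$ is random, and backward construction, where $\eta'_{i^*}$ is forced to $1$.
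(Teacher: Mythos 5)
Your proposal is correct and follows essentially the same route as the paper, which proves this lemma as the $m\in\{A,R\}$ case of Theorem \ref{thm3.1}: expand $A(\g,\g')$ as a sum over the constrained Bernoulli configurations $\{\eta:\eta_{i^*}=1\}\times\{\eta':\eta'_{i^*}=1\}$, pull out the symmetric product of off-$i^*$ Bernoulli factors, and apply $a\min(1,b/a)=\min(a,b)$ to obtain an expression symmetric in $(\g,\eta)\leftrightarrow(\g',\eta')$. Your closing remark on why $\omega(\gamma_{i^*},v_{i^*})$ must appear in the acceptance ratio (forced versus sampled $\eta_{i^*}$) correctly captures the one genuinely delicate point of the argument.
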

\begin{proof}
We will prove this lemma together with Theorem \ref{thm3.1} in the next section.
\end{proof}
\begin{rem}
Efficient strategies of specifying $v_i$ and $\omega_i$ for $i\in[p]$ would enhance the possibilities of including important predictors and excluding undesirable ones. An adaptive configuration is provided in Section \ref{3.3}. 
\end{rem}
\subsubsection{A paired-move multiple-try stochastic search sampler}
\label{pMTM}
A paired-move multiple-try stochastic search MCMC algorithm (\textit{pMTM}) is obtained as a special case of the dMTM algorithm under the following configuration of weight function $\omega$:
\begin{align}
\omega(\gamma_i,v_i;m)=(1-\gamma_i)f(v_i)\mathds{1}_{\{m=A\}}+\gamma_ig(v_i)\mathds{1}_{\{m=R\}}
\label{wt-fun}
\end{align}
where $\mathds{1}_{\{\cdot\}}$ is an indicator function and $f,g: \mathbb{R}^+\rightarrow [0,1]$ determining the probabilities of including and removing predictors. Note that here we further take the type of move into account. It is reasonable because it allows for including an important predictor $i$ with high probability (large $f(v_i)$) and being preserved (small $g(v_i)$). Then the \textit{pMTM} algorithm is given as:
\begin{enumerate}
\item Select move $m\in\{A,R,S\}$ with probabilities $w_A(|\g|),w_R(|\g|)$, and $w_S(|\g|)$.
\item (a) If move $m\in\{A,R\}$: for $i\in[p]$, independently sample $\eta_i\sim\Ber(\omega(\gamma_i,v_i;m))$. Define the forward add or remove set as $N_F(\g)=\{\tog(i,\g)\mid\eta_i=1,i\in[p]\}$.\smallskip\\
(b) If move $m=S$: for $(a,r)\in\g^c\times\g$, sample $\eta_a\sim\Ber(\omega(\gamma_a,v_a;A))$, and independently sample $\eta_r\sim\Ber(\omega(\gamma_r,v_r;R))$ (totally sample $ar$ Bernoulli random variables). Define the forward swap set as $N_F(\g)=\{\tog(a,\tog(r,\g))\mid\eta_a=\eta_r=1,(a,r)\in\g^c\times\g\}$.
\item Select $\g'\in N_{F}(\g)$ according to $T_{F}(\g,\cdot)$. If $m\in\{A,R\}$, denote $\g'=\tog(i^*,\g)$; otherwise denote $\g'=\tog(a^*,\tog(r^*,\g))$ for $m=S$.
\item (a) If move $m=A$: for $j\neq i^*$, sample $\eta_j'\sim\Ber(\omega(\gamma_j',v_j;R))$ and set $\eta_{i^*}'=1$. Define the backward remove set as $N_B(\g')=\{\tog(j,\g')\mid\eta_j'=1,j\in[p]\}$.\smallskip\\
(b) If move $m=R$: for $j\neq i^*$, sample $\eta_j'\sim\Ber(\omega(\gamma_j',v_j;A))$ and set $\eta_{i^*}'=1$. Define the backward add set as $N_B(\g')=\{\tog(j,\g')\mid\eta_j'=1,j\in[p]\}$.\smallskip\\
(c) If move $m=S$: for $(a',r')\in(\g')^c\times\g'$, sample $\eta_{a'}'\sim\Ber(\omega(\gamma_{a'}',v_{a'};A))$, and independently sample $\eta_{r'}'\sim\Ber(\omega(\gamma_{r'}',v_{r'};R))$ (totally sample $a'r'$ Bernoulli random variables) and set $(\eta_{a^*}',\eta_{r^*}')=(1,1)$. Define the backward swap set as $N_B(\g')=\{\tog(a',\tog(r',\g'))\mid\eta_{a'}'=\eta_{r'}'=1,(a',r')\in(\g')^c\times\g'\}$.
\item For $m\in\{A,R,S\}$, the corresponding backward paired-move is $m'\in\{R,A,S\}$. Accept the proposal $\g'$ with probability $\alpha$ where
\begin{align}
\alpha(\g,\g')=\left\{\begin{array}{ll}
\min\bigg\{1,\frac{\pi(\g'\mid \Y)[w_{m'}(|\g'|)\omega(\gamma_{i^*}',v_{i^*};m')T'_{B}(\g',\g)]}{\pi(\g\mid \Y)[w_{m}(|\g|)\omega(\gamma_{i^*},v_{i^*};m)T_{F}(\g,\g')]}\bigg\} & \textrm{if $m\in\{A,R\}$}\medskip\\
\min\bigg\{1,\frac{\pi(\g'\mid \Y)[\omega(\gamma_{r^*}',v_{r^*};A)\omega(\gamma_{a^*}',v_{a^*};R)T'_{B}(\g',\g)]}{\pi(\g\mid \Y)[\omega(\gamma_{a^*},v_{a^*};A)\omega(\gamma_{r^*},v_{r^*};R)T_{F}(\g,\g')]}\bigg\} & \textrm{if $m=S$}
\end{array}\right.
\label{acc-rej-pMTM}
\end{align}
otherwise stay at $\g$.
\end{enumerate}
\begin{thm}
The paired-move multiple-try stochastic search MCMC (\textit{pMTM}) algorithm with acceptance probability \ref{acc-rej-pMTM} satisfies the detailed balance condition leaving the desired target distribution $\pi(\g\mid\Y)$ invariant. 
\label{thm3.1}
\end{thm}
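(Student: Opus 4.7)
The plan is to prove detailed balance $\pi(\g\mid\Y)\,A(\g,\g') = \pi(\g'\mid\Y)\,A(\g',\g)$ for the pMTM kernel $A$ by first establishing a stronger identity at the level of an augmented chain that keeps track of the auxiliary Bernoulli variables $\{\eta_i\}$ and $\{\eta_j'\}$ (equivalently, of the random neighborhood sets $N_F$ and $N_B$), and then marginalizing these auxiliary quantities out. This mirrors the standard proof strategy for MTM in continuous settings \cite{liu2000}, adapted to the discrete model space and, most importantly, to the paired-move structure of Section \ref{3.1}, where each forward move $m$ is reversible only by its paired backward move $m'\in\{R,A,S\}$. By the mixture decomposition \ref{move-type}, detailed balance need only be verified on each paired-move block $(m,m')\in\{(A,R),(R,A),(S,S)\}$ separately, since any other combination contributes zero to the $\g\leftrightarrow\g'$ transition.

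Fix the case $m=A$, so $\g'=\tog(i^*,\g)$ and $m'=R$. Writing the augmented forward transition as the product
\[
\pi(\g\mid\Y)\,w_A(|\g|)\,P(N_F\mid\g,A)\,T_F(\g,\g')\,P(N_B\mid\g',R,i^*)\,\alpha(\g,\g'),
\]
we have, on the event $i^*\in N_F$, $P(N_F\mid\g,A)=\omega(\gamma_{i^*},v_{i^*};A)\prod_{i\ne i^*}\omega(\gamma_i,v_i;A)^{\eta_i}(1-\omega(\gamma_i,v_i;A))^{1-\eta_i}$, and $P(N_B\mid\g',R,i^*)=\prod_{j\ne i^*}\omega(\gamma_j',v_j;R)^{\eta_j'}(1-\omega(\gamma_j',v_j;R))^{1-\eta_j'}$ with $\eta_{i^*}'=1$ forced. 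Expanding the reverse transition from $\g'$ under $m'=R$ analogously, one sees that because $\g$ and $\g'$ agree at every coordinate $i\ne i^*$, the spectator weights $\omega(\gamma_i,v_i;\cdot)$ for $i\ne i^*$ coincide in both directions: the $\prod_{i\ne i^*}$ block of the forward $P(N_F\mid\g,A)$ is identical to the corresponding block in the reverse backward set at $\g$, and likewise for $P(N_B\mid\g',R,i^*)$ and the reverse forward set at $\g'$. Canceling these spectator blocks leaves the single identity
\[
\pi(\g\mid\Y)\,w_A(|\g|)\,\omega(\gamma_{i^*},v_{i^*};A)\,T_F(\g,\g')\,\alpha(\g,\g')=\pi(\g'\mid\Y)\,w_R(|\g'|)\,\omega(\gamma_{i^*}',v_{i^*};R)\,T_B'(\g',\g)\,\alpha(\g',\g),
\]
which reduces to the classical MH identity $a\,\min(1,b/a)=b\,\min(1,a/b)=\min(a,b)$ under the acceptance ratio prescribed by \ref{acc-rej-pMTM}. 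Summing the augmented identity over all realizations of $(N_F,N_B)$ (with $i^*\in N_F$ enforced by selectability and $i^*\in N_B$ enforced by construction) recovers detailed balance at the $(\g,\g')$ level for this block; the $m=R$ case is symmetric.

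For the swap block $m=m'=S$, the argument has the same shape but with two pivot indices $(a^*,r^*)$ toggled simultaneously, producing the two $\omega$ factors that appear in the numerator and denominator of the swap branch of \ref{acc-rej-pMTM}. Since $|\g|=|\g'|$ under a swap, the $w_S$ factors coincide and drop out, consistent with their absence from the acceptance ratio. The main obstacle throughout is the bookkeeping of which auxiliary Bernoullis are drawn freely versus forced at each stage: in the forward direction, $N_F$ is drawn freely but the transition requires the pivot(s) to be included (contributing one $\omega$ per pivot), while $N_B$ has the pivot(s) forced to be included (contributing a factor of one); in the reverse direction these roles swap. Getting this pivot-versus-spectator accounting exactly right is what makes the weights in \ref{acc-rej-pMTM} the unique choice for which the spectator factors cancel cleanly and the standard MH identity closes the argument. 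The same bookkeeping, applied without the move-type decomposition, simultaneously proves Lemma \ref{lem3.2} for the dMTM sampler.
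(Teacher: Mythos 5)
Your proposal is correct and follows essentially the same route as the paper's own proof: both augment the chain with the Bernoulli indicators $(\eta,\eta')$, observe that the spectator factors for $j\neq i^*$ (and the $T_F$, $T'_B$ normalizations over the realized sets) pair up identically between the forward and reverse transitions, and close the argument with the identity $a\min(1,b/a)=\min(a,b)=b\min(1,a/b)$ before summing over realizations --- the paper merely phrases this per-realization pairing as symmetry of the full sum in \ref{db-pMTM1}--\ref{db-pMTM2}. Your explicit pivot-versus-spectator bookkeeping and the observation that $w_S(|\g|)=w_S(|\g'|)$ in the swap block match the paper's steps exactly, including the remark that the same computation yields Lemma \ref{lem3.2}.
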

\begin{proof}
Let $A(\g_1,\g_2)$ be the actual transition probability for moving from $\g_1$ to $\g_2$.\smallskip\\
If $m\in\{A,R\}$: Let $\omega_i=\omega(\gamma_i,v_i;m)$ and $\tilde{\omega}_i=\omega(\gamma_i',v_i;m')$ denote the probabilities of the forward and backward move for predictor $i\in[p]$. Then for $\g'=\tog(i^*,\g)$,
\begin{align}
&\pi(\g\mid\Y)A(\g,\g')\nonumber\\
&=\pi(\g\mid\Y)w_m(|\g|)\sum\limits_{\substack{\eta,\eta'\in\{0,1\}^p\\ \eta_{i^*}=\eta_{i^*}'=1}}\bigg[\omega_{i^*}\bigg\{\prod\limits_{j\neq i^*}\omega_j^{\eta_j}(1-\omega_j)^{1-\eta_j}\tilde{\omega}_j^{\eta_j'}(1-\tilde{\omega}_j)^{1-\eta_j'}\bigg\}\nonumber\\
&\times\left. T_F(\g,\g')\min\left\{1,\frac{\pi(\g'\mid \Y)[w_{m'}(|\g'|)\tilde{\omega}_{i^*}T'_{B}(\g',\g)]}{\pi(\g\mid \Y)[w_{m}(|\g|)\omega_{i^*}T_{F}(\g,\g')]}\right\}\right]\nonumber\\
&=\sum\limits_{\substack{\eta,\eta'\in\{0,1\}^p\\ \eta_{i^*}=\eta_{i^*}'=1}}\bigg[\bigg\{\prod\limits_{j\neq i^*}\omega_j^{\eta_j}(1-\omega_j)^{1-\eta_j}\tilde{\omega}_j^{\eta_j'}(1-\tilde{\omega}_j)^{1-\eta_j'}\bigg\}\nonumber\\
&\times\min\left\{\pi(\g\mid \Y)[w_{m}(|\g|)\omega_{i^*}T_{F}(\g,\g')],\pi(\g'\mid \Y)[w_{m'}(|\g'|)\tilde{\omega}_{i^*}T'_{B}(\g',\g)]\right\}\bigg]
\label{db-pMTM1}
\end{align}
If $m=S$: Note that a swap move can be viewed as a composition of a remove and an add move. Denote the probability of a forward move for a pair of predictors $(a,r)\in \g^c\times\g$ as $\omega_a\omega_r$ where $\omega_a=\omega(\gamma_a,v_a;A)$ and $\omega_r=\omega(\gamma_r,v_r;R)$. Likewise for backward move probabilities, let $\tilde{\omega}_{a'}=\omega(\gamma_{a'}',v_{a'};A)$ and $\tilde{\omega}_{r'}=\omega(\gamma_{r'}',v_{r'};R)$ for $(a',r')\in (\g')^c\times\g'$. Note that $(\ast): w_S(|\g|)=w_S(|\g'|)$ since $|\g|=|\g'|$. Then for $\g'=\tog(a^*,\tog(r^*,\g))$, we have
\begin{align}
&\pi(\g\mid\Y)A(\g,\g')\nonumber\\
&=\pi(\g\mid\Y)w_S(|\g|)\sum\limits_{\substack{(a,r)\in\g^c\times\g\\ (a',r')\in(\g')^c\times\g'\\ \eta_{a^*}=\eta_{r^*}=1\\ \eta_{a^*}'=\eta_{r^*}'=1}}\bigg[\omega_{a^*}\omega_{r^*}\bigg\{\prod\limits_{(a,r)\neq(a^*,r^*)}\omega_a^{\eta_a}(1-\omega_a)^{1-\eta_a}\omega_r^{\eta_r}(1-\omega_r)^{1-\eta_r}\nonumber\\
&\times\prod\limits_{(a',r')\neq(a^*,r^*)}\tilde{\omega}_{a'}^{\eta_{a'}'}(1-\tilde{\omega}_{a'})^{1-\eta_{a'}'}\tilde{\omega}_{r'}^{\eta_{r'}'}(1-\tilde{\omega}_{r'})^{1-\eta_{r'}'}\bigg\}\nonumber\\
&\times\left. T_F(\g,\g')\min\bigg\{1,\frac{\pi(\g'\mid \Y)[\tilde{\omega}_{r^*}\tilde{\omega}_{a^*}T'_{B}(\g',\g)]}{\pi(\g\mid \Y)[\omega_{a^*}\omega_{r^*}T_{F}(\g,\g')]}\bigg\}\right]\nonumber\\
&\stackrel{(\ast)}{\eqdef}\sum\limits_{\substack{(a,r)\in\g^c\times\g\\ (a',r')\in(\g')^c\times\g'\\ \eta_{a^*}=\eta_{r^*}=1\\ \eta_{a^*}'=\eta_{r^*}'=1}}\bigg[\bigg\{\prod\limits_{(a,r)\neq(a^*,r^*)}\omega_a^{\eta_a}(1-\omega_a)^{1-\eta_a}\omega_r^{\eta_r}(1-\omega_r)^{1-\eta_r}\nonumber\\
&\times\prod\limits_{(a',r')\neq(a^*,r^*)}\tilde{\omega}_{a'}^{\eta_{a'}'}(1-\tilde{\omega}_{a'})^{1-\eta_{a'}'}\tilde{\omega}_{r'}^{\eta_{r'}'}(1-\tilde{\omega}_{r'})^{1-\eta_{r'}'}\bigg\}\nonumber\\
&\times\min\left\{w_S(|\g|)\pi(\g\mid \Y)[\omega_{a^*}\omega_{r^*}T_{F}(\g,\g')],w_S(|\g'|)\pi(\g'\mid \Y)[\tilde{\omega}_{r^*}\tilde{\omega}_{a^*}T'_{B}(\g',\g)]\right\}\bigg]
\label{db-pMTM2}
\end{align}
Note that the expressions \ref{db-pMTM1} and \ref{db-pMTM2} are symmetric in $\g$ and $\g'$ and hence $\pi(\g\mid\Y)A(\g,\g')=\pi(\g'\mid\Y)A(\g',\g)$ which is the detailed balance condition.
\end{proof}
\begin{rem}
The proof is established on a general form of $T$ as in \ref{equ2.9}. If we specify $S(\tilde{\g})\propto\pi(\tilde{\g}\mid\Y)$, the unnormalized marginal posterior probability for $\tilde{\g}$, then the acceptance ratio $\alpha$ is 
\begin{align}
\alpha(\g,\g')=\left\{\begin{array}{ll}
\min\bigg\{1,\frac{w_{m'}(|\g'|)\omega(\gamma_{i^*}',v_{i^*};m')\sum_{\tilde{\g}\in N_F(\g)}\pi(\tilde{\g}\mid\Y)}{w_{m}(|\g|)\omega(\gamma_{i^*},v_{i^*};m)\sum_{\tilde{\g}\in N_B(\g')}\pi(\tilde{\g}\mid\Y)}\bigg\} & \textrm{if $m\in\{A,R\}$}\medskip\\
\min\bigg\{1,\frac{\omega(\gamma_{r^*}',v_{r^*};A)\omega(\gamma_{a^*}',v_{a^*};R)\sum_{\tilde{\g}\in N_F(\g)}\pi(\tilde{\g}\mid\Y)}{\omega(\gamma_{a^*},v_{a^*};A)\omega(\gamma_{r^*},v_{r^*};R)\sum_{\tilde{\g}\in N_B(\g')}\pi(\tilde{\g}\mid\Y)}\bigg\} & \textrm{if $m=S$}
\end{array}\right.
\label{acc-rej-pMTM-s}
\end{align}
All simulations in the paper are performed under this setting. Note that $\pi(\tilde{\g}\mid\Y)\propto\mathcal{L}(\Y\mid\tilde{\g})\pi(\tilde{\g})$. When the sample size $n$ is large, computing $\mathcal{L}(\Y\mid\tilde{\g})$ will be expensive. An alternative choice is using Laplace approximation of the marginal likelihood $\hat{\mathcal{L}}(\Y\mid\tilde{\g})$ and hence $S(\tilde{\g})=\hat{\mathcal{L}}(\Y\mid\tilde{\g})\pi(\tilde{\g})$.
\end{rem}
\begin{rem}
This non-trivial generalization of the MTM algorithm extends MCMC for sampling high-dimensional inclusion vectors. The framework is general and flexible, allowing for varied settings based on different problems, structures of datasets and computational budgets. Adopting adaptive importance scores for predictors within this framework is discussed in the next section.  
\end{rem}
\subsection{Adaptive predictor importance}
\label{3.3}
Weight function $\omega(\gamma_i,v_i;m)$ for $i\in[p]$ and $m\in\{A,R\}$ provides a mechanism to improve mixing and robustness for sampling predictor inclusion vectors in both low-signal and high-dimensional settings. In spectrometry or gene expression data, for example, predictors are often highly correlated because of their spatial proximity, and therefore ``exchangeable" in the sense of their explanatory power. It is well known that penalized methods such as the \textit{Lasso} [\cite{tib1996}] often simply selects one out of a set of highly correlated predictors, and the \textit{elastic net} penalty [\cite{zou2005}] is often a more robust shrinkage method to use in such settings. A regularized estimate for the correlation matrix [\cite{schafer2005}; \cite{bickel2008}] or other similarity measures between predictors may be used to efficiently update importance scores.  

Suppose that $f$ and $g$ in \ref{wt-fun} are monotone increasing and decreasing functions of $v_i$ respectively. Therefore, as importance scores are updated, predictors with large $v_i$ are promoted within add neighborhoods and demoted in remove neighborhoods. Denote the length of the MCMC chain as $T$ with a burnin period $b_0$. Define a $p\times p$ thresholded absolute correlation matrix $\C$ with
\begin{align}
C_{ij}=|\rho_{ij}|\mathds{1}_{\{|\rho_{ij}|>\varepsilon\}}
\label{abs-cor}
\end{align} 
where $\rho_{ij}=\textrm{Cor}(\X_i,\X_j)$ is the empirical correlation between predictors $i,j\in[p]$ and $\varepsilon\in(0,1)$ is a pre-specified threshold. By incorporating the correlation structure of the design matrix and the history of the MCMC chain, we introduce an adaptive importance scores for predictors. For for all $i\in[p]$ at $(t+1)^{th}$ iteration, $v_i(t+1)$ is updated as follows:
\begin{align}
v_i(t+1)=v_i(t)+z(i,\g)\left(\frac{t}{b_0}\mathds{1}_{\{t\leqslant b_0\}}+\frac{1}{(t-b_0)^{\zeta}}\mathds{1}_{\{t>b_0\}}\right)
\label{update-score}
\end{align}
with $z:[p]\times\{0,1\}^p\rightarrow[0,1]$; in particular, $z(i,\g)=(1-\gamma_i)(\sum_{j=1}^{p}\gamma_jC_{ij}/\sum_{j=1}^{p}\gamma_j)+\gamma_i$. We suggest specifying the learning rate $\zeta\in(0.5,1]$ as $2/3$ following convention from stochastic gradient descent. Further modification may be adopting the quantile of $|\rho_{ij}|$s as the threshold to ensure a fixed ratio of entries of $\C$ are zeros.

Based on this updating scheme for importance scores, we propose an adaptive version of the \textit{pMTM} where the probability of the $i^{th}$ predictor to be included in an add (remove) set of neighborhoods is (inversely) proportional to its importance score $v_i$. Specifically, we define
\begin{align}
f(v_i)=\frac{Mv_i}{Mv_i+p}\qquad g(v_i)=\frac{1}{v_i}
\end{align}
Under this configuration, we suggest initializing the \textit{pMTM} sampler with $\g=(0,0,...,0)^T$ and $v_i=1$ for all $i\in[p]$. Accordingly, $M$ can be viewed as a target ``neighborhood budget'' noting that the expected number of add neighborhoods is $\sum_{i\notin\g}f(v_i)\approx M(p-|\g|)/(p+M)\approx M$ initially when $M=o(p)$. When the true model size $d=o(p)$, most of the importance scores retain $v_i\approx 1$ and hence the stochastic control of the number of neighborhoods is maintained. Stationarity of the \textit{pMTM} sampler is preserved subject to diminishing adaptation of predictor importance scores [\cite{roberts2007}]. This adaptive version of the \textit{pMTM} sampler is denoted as \textit{ada-pMTM}.
\section{Numerical studies}
\label{sec4}
In this section\footnote{All the simulations are run in \texttt{R} on a computer with x86$\times$64 Intel(R) Core(TM) i7-3770k.}, two examples are provided to illustrate the effectiveness of \textit{pMTM} and \textit{ada-pMTM} on model space exploration. Through intensive simulation studies, the comparisons between \textit{pMTM} and various frequentist and Bayesian variable selection methods exhibit the state-of-the-art performance of our framework. An analysis of a real data example adopting \textit{pMTM} is presented to demonstrate the use of the proposed algorithms. We close this section by a comparison of computational efficiency on a toy example. Except for Section \ref{comeff}, all simulations are performed without parallelization.

We first specify the tunning parameters adopted in all simulations in this section. Burnin period $b_0$ is set be the first $20\%$ of the total length of the chain. We adopt the updating scheme \ref{update-score} with $\zeta=2/3$ and $\C$ with the $75\%$ quantile of $|\rho_{ij}|$s as the threshold. For simplicity, $g(v_i)$ for all $i\in[p]$ is specified as 1 which means that all remove neighborhoods are included in the forward move set when a remove move is proposed.
\subsection{Effectiveness of pMTM}


In this section, we compare the proposed algorithms with two traditional Gibbs samplers, random-scan Gibbs and systematic-scan Gibbs, based on their efficiencies on exploration of the model space. \cite{george1993} and \cite{george1997} describe a systematic-scan Gibbs sampler by sequentially updating components of $\g$ according to $\pi(\gamma_i\mid \gamma_{-i},\Y)$ for $i\in[p]$ in one iteration. A random-scan Gibbs sampler will randomly select an index $i$ first and then update the corresponding $\gamma_i$

The algorithms are compared using a simulated dataset based on the number of marginal likelihood evaluations needed to find the true model. Simulated data is based on the dataset used in \cite{west2001} which contain 49 patients and each of which has gene expression data including 3883 genes. In terms of the rank of contributions of different genes to tumor provided in the supporting information 3 in \cite{west2001}, we extract \texttt{TFF1} (rank 1), \texttt{ESR1} (rank 2), \texttt{CYP2B6} (rank 3) and \texttt{IGFBP2} (rank 5) to form the true predictors. The reason why we didn't choose the $4^{th}$ gene \texttt{TFF3} is that it has a high correlation with \texttt{TFF1}.

The simulated dataset is constructed as follows: we first normalized these four genes and further combined with standard multivariate normals to form the design matrix. Then $\b$ is specified by $\b_{\g}=(1.3,0.3,-1.2,-0.5)$ for $\g=\{1,2,3,4\}$ with a sequence of increasing values of $p=50,100,150,...,500$. $\varepsilon$ is standard normal with mean 0 and variance 0.5. The values of regression coefficients and variance of noise are exactly same with the example in Section 4.4 of \cite{hans2007}. Hyperparameters are specified as $u=4$, $v=p-4$, $M=p/10$. We report the median of results based on 100 synthetic dataset for each value of $p$ in Figure.\ref{fig2}. Circles and crosses are employed to represent the value larger than $\log(3\times 10^5)$.

\begin{figure}[!htbp]
\centering
\includegraphics[width=10cm]{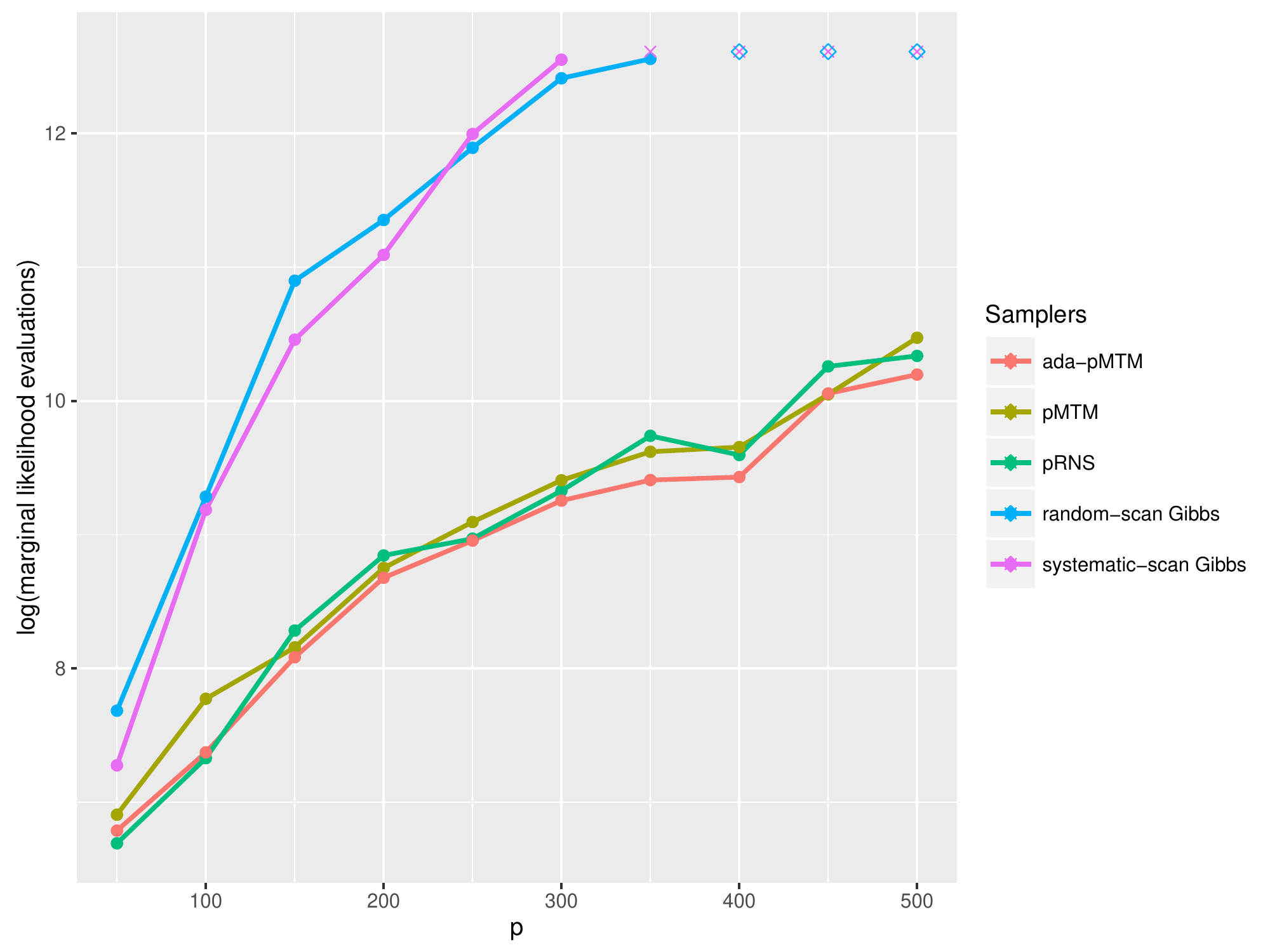}
\caption[]{Logarithm of the median number of marginal likelihood evaluations needed to find the true model}
\label{fig2}
\end{figure}
\begin{figure}[!htbp]
\centering
\includegraphics[width=10cm]{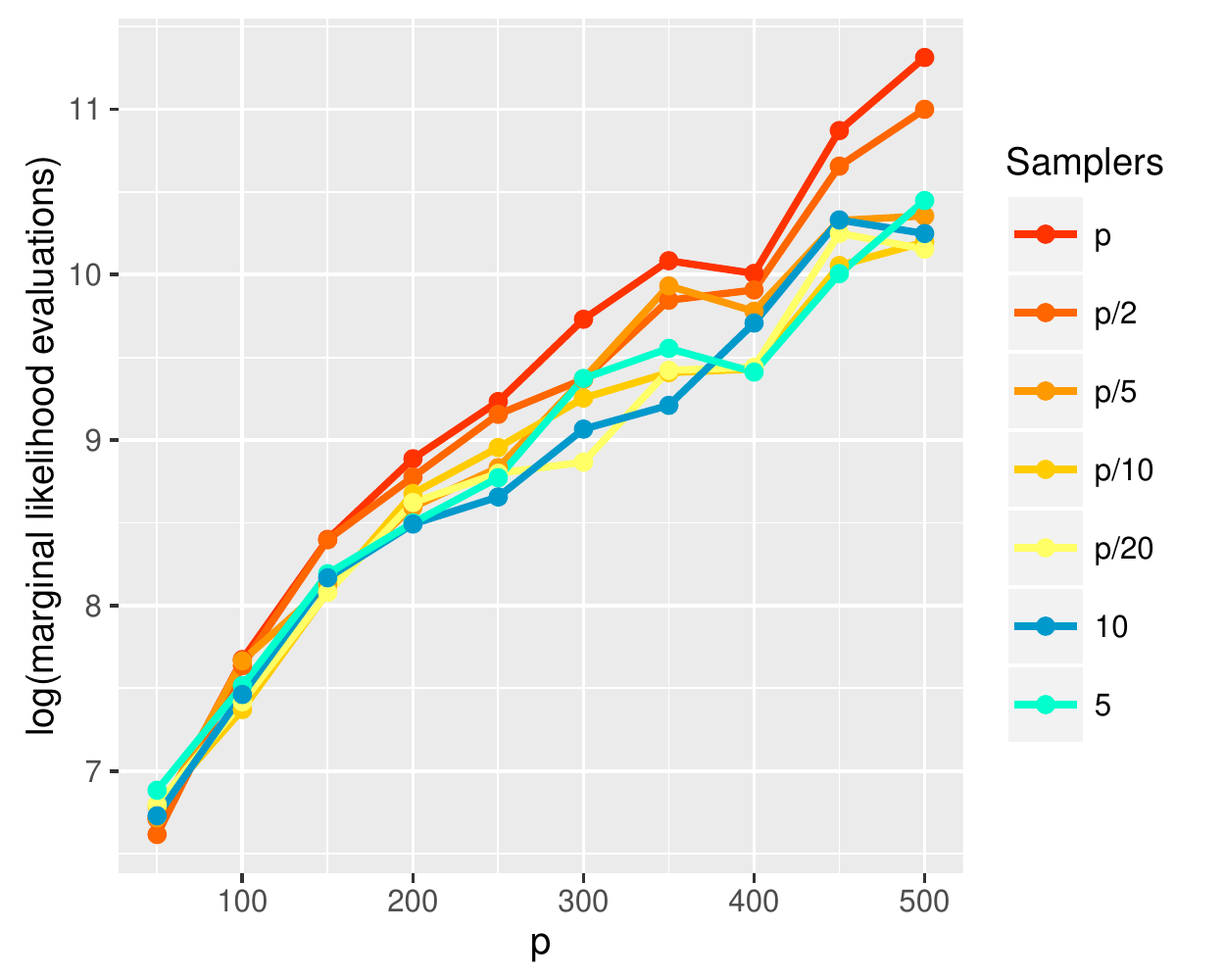}
\caption[]{Logarithm of the median number of marginal likelihood evaluations needed to find the true model for \textit{ada-pMTM} with different expected number of trails ($M$).}
\label{fig5}
\end{figure}

 
According to the graph, it is clear that the paired-move strategy can effectively reduce the computational cost comparing to two Gibbs samplers. When $p=50,100$, \textit{ada-pMTM} is slightly worse than \textit{pRNS} but still competitive. As $p$ becomes larger, \textit{ada-pMTM} dominates all other algorithms which verifies the claim that multiple trials can help the sampler move out of local modes.\\
\\
To explore how the choice of $M$ influence the efficiency of \textit{ada-pMTM}, we further implemented \textit{ada-pMTM} with different choices of $M$. Specifically, two groups of $M$ are specified as:
\begin{itemize}
\item a function of $p$: $M=M(p)=p,p/2,p/5,p/10,p/20$, and
\item a fixed value: $M=5,10$
\end{itemize}
As displayed in Figure.\ref{fig5}, a vague pattern appears. It should be clear that the numbers of marginal likelihood evaluation needed for $M=p$ or $p/2$ are larger suggesting the less efficiency for large $M$ while the other 5 choices do not show significant differences. We will use $M=p/10$ throughout the rest of the simulation studies.
\subsection{Simulated examples}
\label{4.2}
Intensive simulation studies are presented for assessing the performances of proposed algorithms. In particular, we compare proposed algorithms to random- and systematic- scan Gibbs [\cite{george1993}; \cite{george1997}], \textit{EMVS} [\cite{rovckova2014}], \textit{Lasso} [\cite{tib1996}], \textit{adaptive lasso} [\cite{zou2006}] and \textit{SCAD} [\cite{fan2001}]. Different structures and degrees (low, moderate and high) of correlation of the design matrix are considered. Specifically, we specify $(n,p)=(100,1000)$ for all experiments. The signal-to-noise ratio $\|\b\|_2/\sigma$ is adjusted to guarantee that fitting results for different methods are moderate and comparable. Four structures are described as follows:
\begin{enumerate}
\item \textbf{Independent design:}
This example is originally analyzed by \cite{fan2008} with $t=4,5$. We further make this example difficult by setting $t=1$. 
\begin{align*}
\X_1,\X_2,...,\X_p&\simiid\N(0,\I_n)\\
\beta_i&=(-1)^{U_i}(t\log n/\sqrt{n}+|N(0,1)|)\text{ }\text{where $U_i\simiid\Un(0,1)$ }\text{for $i=1,2,....,8$}\\
\varepsilon&\sim N(0,1.5^2\I_n)
\end{align*}
\item \textbf{Compound symmetry:} This example is revised based on the Example 1 in \cite{fan2008} where $\|\b\|_2$ is much smaller here. Every pair of predictors has the same theoretical correlation $\rho$. We adopt $\rho=0.3,0.6$ and $0.9$ to allow for different degrees of correlation.
\begin{align*}
\X_1,\X_2,...,\X_p&\simiid\N(0,\Sigmab) \quad\text{with $\Sigma_{ij}=\rho$ for $i\neq j$ and 1 for $i=j$}\\
\b&=(2.0,2.5,-2.0,2.5,-2.5,0,...,0)\\
\varepsilon&\sim N(0,1.5^2\I_n)\\
\end{align*}
\item \textbf{Autoregression:} This example is modified from the Example 2 in \cite{tib1996}. This type of correlation structure widely exists in time series. Again, we set $\rho=0.3,0.6,0.9$. 
\begin{align*}
\X_1,\X_2,...,\X_p&\simiid\N(0,\Sigmab) \quad\text{with $\Sigma_{ij}=\rho^{|i-j|}$}\\
\g&=\{1,2,3,4,5,20,35,60,90,150,151,300\}\\
\b_{\g}&=(2,-3,2,2,-3,3,-2,3,-2,3,2,-2)\\
\varepsilon&\sim N(0,2^2\I_n)
\end{align*}
\item \textbf{Group structure:} This example is revised from the simulated experiment 2 in \cite{bottolo2010} and first analyzed in \cite{nott2005}. A group structured correlation exhibits: collinearity exists between $\X_i$ and $\X_{i+1}$ for $i=1,3,5$ and linear relationship is presented in group $(\X_7,\X_8,\X_9,\X_{10})$ and $(\X_{11},\X_{12},\X_{13},\X_{14},\X_{15})$. Whether the algorithm can select the correct predictors and do not select variables from the
second block are of interest.
The model is simulated as follows: 
\begin{align*}
\Z,\Z_1,\Z_2,...,\Z_{15}&\simiid\N(0,\I_n)\\
\X_{i}&=\rho_1 \Z+2\Z_i \quad\text{for $i=1,3,5,8,9,10,12,13,14,15$}\\
\X_{i}&=\rho_2 \X_{i-1}+ \rho_3 \Z_{i}\quad\text{for $i=2,4,6$}\\
\X_7&=\rho_4(\X_8+\X_9-\X_{10})+\rho_5 \Z_7\\
\X_{11}&=\rho_5(\X_{14}+\X_{15}-\X_{12}-\X_{13})+\rho_5 \Z_{11}\\
\b_{\g}&=(1.5,1.5,1.5,1.5,-1.5,1.5,1.5,1.5) \text{ with } \g=\{1,3,5,7,8,11,12,13\}
\end{align*}
\end{enumerate}
where $\rho_i\quad i=1,2,...,5$ are adjusted to import small, moderate and high correlation into each group.

\texttt{R} packages \texttt{glmnet}, \texttt{parcor}, \texttt{ncvreg} and \texttt{EMVS} are used for \textit{Lasso}, \textit{adaptive lasso}, \textit{SCAD} and \textit{EMVS} respectively. The tunning parameter $\lambda$'s are specified by minimizing cross-validation errors. All Bayesian methods are implemented with Beta-Binomial prior using same hyperparameters: $u=10$, $v=p-10$. We consider the highest probability model (HPM), median probability model (MPM) [\cite{barb2004}] (i.e., the model containing predictors with inclusion probability larger than 0.5) and Bayesian model averaging (BMA) for proposed algorithms. Recommended default settings for \textit{EMVS} are adopted except for a more elaborate sequences of $v_0$. The comparisons are based on five metrics: \textbf{Average model size:} number of predictors selected; \textbf{Runtime:} running time for different methods (fixed running time for all MCMC samplers); \textbf{FN:} number of false negatives; \textbf{FP:} number of false positives; \textbf{FDR:} false discovery rate and \textbf{$L_2$ distance:} $\|\hat{\b}-\b\|_2$. For each scenario, 100 synthetic datasets are simulated and the mean of above  metrics are reported for assessment.
\begin{table}[H]
\begin{center}
\scriptsize
\begin{tabular}{c c | c c c  c  c c}
\hline
\hline
\multicolumn{8}{c}{$p$=1000}  \\
& & \textit{Average model size} & \textit{FN} & \textit{FP} & \textit{FDR} & $\mathbf{\|\hat{\beta}-\beta\|_2}$  & \textit{Runtime}\\
\hline
\multirow{3}*{ada-pMTM}& MPM & 6.92 & 1.60 & 0.52 & 0.06545 & 0.96247 &\multirow{3}*{9.98}\\
& HPM & 7.02 & 1.65 & 0.67 & 0.07717 & 0.97281 &\\
& BMA & - & - & - & - & \color{red}\textbf{0.92419} & \\
\hline
\multirow{3}*{pMTM}& MPM & 6.80 & 1.64 & 0.44 & 0.05683 & 0.96258 &\multirow{3}*{10.03}\\
& HPM & 6.78 & 1.78 & 0.56 & 0.06033 & 0.99559 &\\
& BMA & - & - & - & - & \textbf{0.94179} &\\
\hline
\multirow{3}*{pRNS (equal evaluations)}& MPM & 7.30 & 1.53 & 0.83 & 0.09555 & 1.00338 &\multirow{3}*{37.15}\\
& HPM & 7.96 & 1.53 & 1.49 & 0.15161 & 1.07369 &\\
& BMA & - & - & - & - & 0.98907 & \\
\hline
\multirow{3}*{pRNS (equal time)}& MPM & 7.61 & 1.56 & 1.17 & 0.12756 & 1.04951 &\multirow{3}*{9.09}\\
& HPM & 7.77 & 1.61 & 1.38 & 0.14125 & 1.08198 &\\
& BMA & - & - & - & - & 1.02115 & \\
\hline
\multirow{3}*{pRNS (equal iterations)}& MPM & 2.16 & 6.07 & 0.23 & 0.06045 & 2.86641 &\multirow{3}*{0.24}\\
& HPM & 2.22 & 6.12 & 0.34 & 0.10738 & 2.91814 &\\
& BMA & - & - & - & - & 2.72615 & \\
\hline
\multicolumn{2}{c|}{EMVS}  & 4.47 & 3.56 & 0.03 & 0.00421 & 1.52944 & 13.75\\
\multicolumn{2}{c|}{Lasso} & 21.23 & 1.34 & 14.57 & 0.59485 & 1.69590 & 2.07\\
\multicolumn{2}{c|}{adaptive lasso} & 12.13 & 1.34 & 5.47 & 0.36516 & 1.12010 & 3.93\\
\multicolumn{2}{c|}{SCAD} & 33.24 & 0.36 & 25.60 & 0.75887 & 0.99172 & 6.56\\
\hline
\hline
\end{tabular}
\end{center}
\caption{Independent design with $(n,p_0)=(100,8)$}
\end{table}

\begin{table}[H]
\begin{center}
\scriptsize
\begin{tabular}{c c | c c c  c  c c}
\hline
\hline
\multicolumn{8}{c}{$\rho$=0.3}  \\
& & \textit{Average model size} & \textit{FN} & \textit{FP} & \textit{FDR} & $\mathbf{\|\hat{\beta}-\beta\|_2}$  & \textit{Runtime}\\
\hline
\multirow{3}*{ada-pMTM}& MPM & 5.15 & 0.00 & 0.15 & 0.02357 & \textbf{0.44765} &\multirow{3}*{10.74}\\
& HPM & 5.13 & 0.00 & 0.13 & 0.02119 & \color{red}\textbf{0.44244} &\\
& BMA & - & - & - & - & 0.48411 & \\
\hline
\multirow{3}*{pMTM}& MPM & 4.99 & 0.24 & 0.23 & 0.04086 & 0.92833 &\multirow{3}*{10.35}\\
& HPM & 5.18 & 0.06 & 0.24 & 0.03808 & 0.58596 &\\
& BMA & - & - & - & - & 0.88684 &\\
\hline
\multirow{3}*{pRNS (equal evaluations)}& MPM & 5.33 & 0.00 & 0.33 & 0.05167 & 0.49293 &\multirow{3}*{37.33}\\
& HPM & 5.45 & 0.00 & 0.45 & 0.06899 & 0.52963 &\\
& BMA & - & - & - & - & 0.52373 & \\
\hline
\multirow{3}*{pRNS (equal time)}& MPM & 5.57 & 0.00 & 0.57 & 0.07907 & 0.55974 &\multirow{3}*{10.87}\\
& HPM & 5.65 & 0.00 & 0.65 & 0.08597 & 0.57467 &\\
& BMA & - & - & - & - & 0.56371 & \\
\hline
\multirow{3}*{pRNS (equal iterations)}& MPM & 2.41 & 2.82 & 0.23 & 0.09117 & 3.88156 &\multirow{3}*{0.21}\\
& HPM & 2.82 & 2.98 & 0.80 & 0.23667 & 4.03775 &\\
& BMA & - & - & - & - & 3.59336 & \\
\hline
\multicolumn{2}{c|}{EMVS} & 5.03 & 0.00 & 0.03 & 0.00500 & 0.83232 & 13.47\\
\multicolumn{2}{c|}{Lasso} & 15.28 & 0.00 & 10.28 & 0.59420 & 1.52074 & 2.04\\
\multicolumn{2}{c|}{adaptive lasso} & 6.12 & 0.00 & 1.12 & 0.13605 & 0.52061 & 3.95\\
\multicolumn{2}{c|}{SCAD} & 8.64 & 0.00 & 3.64 & 0.24591 & 0.45551 & 4.25\\
\hline
\hline
\multicolumn{8}{c}{$\rho$=0.6}  \\
& & \textit{Average model size} & \textit{FN} & \textit{FP} & \textit{FDR} & $\mathbf{\|\hat{\beta}-\beta\|_2}$  & \textit{Runtime}\\
\hline
\multirow{3}*{ada-pMTM}& MPM & 5.18 & 0.01 & 0.19 & 0.02946 & 0.57993 &\multirow{3}*{10.50}\\
& HPM & 5.20 & 0.01 & 0.21 & 0.03065 & \color{red}\textbf{0.56093} &\\
& BMA & - & - & - & - & 0.63864 & \\
\hline
\multirow{3}*{pMTM}& MPM & 5.16 & 0.11 & 0.27 & 0.04093 & 0.79130 &\multirow{3}*{10.43}\\
& HPM & 5.20 & 0.04 & 0.24 & 0.03594 & 0.63270 &\\
& BMA & - & - & - & - & 0.85272 &\\
\hline
\multirow{3}*{pRNS (equal evaluations)}& MPM & 5.66 & 0.00 & 0.66 & 0.08798 & 0.72689 &\multirow{3}*{36.36}\\
& HPM & 5.71 & 0.00 & 0.71 & 0.09543 & 0.74347 &\\
& BMA & - & - & - & - & 0.74913 & \\
\hline
\multirow{3}*{pRNS (equal time)}& MPM & 5.61 & 0.00 & 0.61 & 0.08156 & 0.70428 &\multirow{3}*{9.98}\\
& HPM & 5.73 & 0.00 & 0.73 & 0.09294 & 0.72342 &\\
& BMA & - & - & - & - & 0.74506 & \\
\hline
\multirow{3}*{pRNS (equal iterations)}& MPM & 2.08 & 3.15 & 0.23 & 0.07867 & 4.17630 &\multirow{3}*{0.26}\\
& HPM & 2.55 & 3.12 & 0.67 & 0.28433 & 4.10882 &\\
& BMA & - & - & - & - & 3.79928 & \\
\hline
\multicolumn{2}{c|}{EMVS}  & 5.05 & 0.01 & 0.06 & 0.01 & 1.01934 & 11.64\\
\multicolumn{2}{c|}{Lasso} & 18.36 & 0.00 & 13.36 & 0.63903 & 2.03600 & 2.29\\
\multicolumn{2}{c|}{adaptive lasso} & 7.59 & 0.01 & 2.60 & 0.26217 & 0.80726 & 4.44\\
\multicolumn{2}{c|}{SCAD} & 6.81 & 0.00 & 1.81 & 0.15528 & 0.58029 & 4.53\\
\hline
\hline
\multicolumn{8}{c}{$\rho$=0.9}  \\
& & \textit{Average model size} & \textit{FN} & \textit{FP} & \textit{FDR} & $\mathbf{\|\hat{\beta}-\beta\|_2}$  & \textit{Runtime}\\
\hline
\multirow{3}*{ada-pMTM}& MPM & 3.38 & 1.87 & 0.25 & 0.08229 & 3.02116 &\multirow{3}*{10.22}\\
& HPM & 3.45 & 1.93 & 0.38 & 0.10679 & 3.02510 &\\
& BMA & - & - & - & - & \color{red}\textbf{2.79824} & \\
\hline
\multirow{3}*{pMTM}& MPM & 3.52 & 1.90 & 0.42 & 0.11136 & 3.15391 &\multirow{3}*{10.88}\\
& HPM & 3.58 & 1.96 & 0.54 & 0.14769 & 3.15457 &\\
& BMA & - & - & - & - & \textbf{2.95858} &\\
\hline
\multirow{3}*{pRNS (equal evaluations)}& MPM & 3.98 & 1.83 & 0.81 & 0.18968 & 3.26200 &\multirow{3}*{36.32}\\
& HPM & 4.27 & 1.86 & 1.13 & 0.24183 & 3.34397 &\\
& BMA & - & - & - & - & 3.16544 & \\
\hline
\multirow{3}*{pRNS (equal time)}& MPM & 3.86 & 1.86 & 0.72 & 0.17876 & 3.26003 &\multirow{3}*{10.57}\\
& HPM & 4.24 & 1.89 & 1.13 & 0.25938 & 3.42405 &\\
& BMA & - & - & - & - & 3.13012 & \\
\hline
\multirow{3}*{pRNS (equal iterations)}& MPM & 1.44 & 3.96 & 0.40 & 0.22317 & 4.90713 &\multirow{3}*{0.26}\\
& HPM & 1.87 & 3.95 & 0.82 & 0.37617 & 4.95334 &\\
& BMA & - & - & - & - & 4.51167 & \\
\hline
\multicolumn{2}{c|}{EMVS}  & 2.79 & 2.51 & 0.30 & 0.11138 & 3.80729 & 10.26\\
\multicolumn{2}{c|}{Lasso} & 11.54 & 1.43 & 7.97 & 0.60918 & 4.03553 & 3.76\\
\multicolumn{2}{c|}{adaptive lasso} & 9.74 & 0.98 & 5.72 & 0.48575 & 3.10494 & 7.10\\
\multicolumn{2}{c|}{SCAD} & 4.67 & 1.95 & 1.62 & 0.29208 & 3.47218 & 2.95\\
\hline
\hline
\end{tabular}
\end{center}
\caption{Compound symmetry with $(n,p,p_0)=(100,1000,5)$}
\end{table}

\begin{table}[H]
\begin{center}
\scriptsize
\begin{tabular}{c c | c c c  c  c c}
\hline
\hline
\multicolumn{8}{c}{$\rho$=0.3}  \\
& & \textit{Average model size} & \textit{FN} & \textit{FP} & \textit{FDR} & $\mathbf{\|\hat{\beta}-\beta\|_2}$  & \textit{Runtime}\\
\hline
\multirow{3}*{ada-pMTM}& MPM & 11.07 & 1.07 & 0.14 & 0.01620 & 1.75973 &\multirow{3}*{15.19}\\
& HPM & 11.36 & 1.15 & 0.51 & 0.04634 & 1.74790 &\\
& BMA & - & - & - & - & 1.81494 & \\
\hline
\multirow{3}*{pMTM}& MPM & 10.05 & 2.16 & 0.21 & 0.02417 & 2.86410 &\multirow{3}*{16.33}\\
& HPM & 10.67 & 2.32 & 0.99 & 0.09460 & 2.80901 &\\
& BMA & - & - & - & - & 2.87099 &\\
\hline
\multirow{3}*{pRNS (equal evaluations)}& MPM & 12.55 & 0.00 & 0.55 & 0.04031 & 0.95725 &\multirow{3}*{43.94}\\
& HPM & 13.08 & 0.00 & 1.08 & 0.07350 & 0.99831 &\\
& BMA & - & - & - & - & 0.96122 & \\
\hline
\multirow{3}*{pRNS (equal time)}& MPM & 11.88 & 0.21 & 0.09 & 0.01115 & 0.99528 &\multirow{3}*{14.89}\\
& HPM & 11.84 & 0.30 & 0.14 & 0.01478 & 1.05328 &\\
& BMA & - & - & - & - & 1.03719 & \\
\hline
\multirow{3}*{pRNS (equal iterations)}& MPM & 1.93 & 10.29 & 0.22 & 0.10267 & 7.90704 &\multirow{3}*{0.27}\\
& HPM & 2.18 & 10.27 & 0.45 & 0.15617 & 7.96132 &\\
& BMA & - & - & - & - & 7.50963 & \\
\hline
\multicolumn{2}{c|}{EMVS}  & 9.57 & 2.78 & 0.35 & 0.06049 & 3.07500 & 27.05\\
\multicolumn{2}{c|}{Lasso} & 42.45 & 1.03 & 31.48 & 0.71510 & 4.60301 & 2.28\\
\multicolumn{2}{c|}{adaptive lasso} & 19.55 & 0.58 & 8.13 & 0.37514 & 2.27123 & 4.27\\
\multicolumn{2}{c|}{SCAD} & 25.29 & 0.00 & 13.29 & 0.46581 & \color{red}\textbf{0.95686} & 4.68\\
\hline
\hline
\multicolumn{8}{c}{$\rho$=0.6}  \\
& & \textit{Average model size} & \textit{FN} & \textit{FP} & \textit{FDR} & $\mathbf{\|\hat{\beta}-\beta\|_2}$  & \textit{Runtime}\\
\hline
\multirow{3}*{ada-pMTM}& MPM & 9.89 & 2.56 & 0.45 & 0.05300 & 3.40834 &\multirow{3}*{13.24}\\
& HPM & 10.44 & 2.78 & 1.22 & 0.10091 & 3.60832 &\\
& BMA & - & - & - & - & 3.47301 & \\
\hline
\multirow{3}*{pMTM}& MPM & 9.02 & 3.59 & 0.61 & 0.06996 & 4.32600 &\multirow{3}*{12.58}\\
& HPM & 9.46 & 3.82 & 1.28 & 0.12107 & 4.44422 &\\
& BMA & - & - & - & - & 4.14006 &\\
\hline
\multirow{3}*{pRNS (equal evaluations)}& MPM & 12.42 & 0.16 & 0.58 & 0.04133 & 1.21687 &\multirow{3}*{41.96}\\
& HPM & 13.24 & 0.16 & 1.40 & 0.09381 & 1.31555 &\\
& BMA & - & - & - & - & 1.22476 & \\
\hline
\multirow{3}*{pRNS (equal time)}& MPM & 12.39 & 0.32 & 0.71 & 0.05421 & \textbf{1.38717} &\multirow{3}*{13.51}\\
& HPM & 13.24 & 0.32 & 1.56 & 0.10572 & 1.48584 &\\
& BMA & - & - & - & - & \color{red}\textbf{1.37363} & \\
\hline
\multirow{3}*{pRNS (equal iterations)}& MPM & 2.57 & 9.89 & 0.46 & 0.15567 & 7.88735 &\multirow{3}*{0.25}\\
& HPM & 2.78 & 9.92 & 0.70 & 0.21045 & 7.95898 &\\
& BMA & - & - & - & - & 7.53342 & \\
\hline
\multicolumn{2}{c|}{EMVS}  & 7.63 & 4.81 & 0.44 & 0.06244 & 5.06578 & 23.88\\
\multicolumn{2}{c|}{Lasso} & 30.88 & 3.36 & 22.24 & 0.65295 & 5.94437 & 2.36\\
\multicolumn{2}{c|}{adaptive lasso} & 17.81 & 3.05 & 8.86 & 0.41424 & 5.00711 & 4.42\\
\multicolumn{2}{c|}{SCAD} & 30.20 & 1.69 & 19.89 & 0.63170 & 3.64840 & 5.79\\
\hline
\hline
\multicolumn{8}{c}{$\rho$=0.9}  \\
& & \textit{Average model size} & \textit{FN} & \textit{FP} & \textit{FDR} & $\mathbf{\|\hat{\beta}-\beta\|_2}$  & \textit{Runtime}\\
\hline
\multirow{3}*{ada-pMTM}& MPM & 6.14 & 6.76 & 0.90 & 0.14867 & 6.69856 &\multirow{3}*{12.65}\\
& HPM & 7.12 & 6.74 & 1.86 & 0.22984 & 6.77342 &\\
& BMA & - & - & - & - & 6.33825 & \\
\hline
\multirow{3}*{pMTM}& MPM & 5.98 & 7.20 & 1.18 & 0.19519 & 7.04164 &\multirow{3}*{13.21}\\
& HPM & 6.63 & 7.10 & 1.73 & 0.25785 & 7.10383 &\\
& BMA & - & - & - & - & 6.57290 &\\
\hline
\multirow{3}*{pRNS (equal evaluations)}& MPM & 7.87 & 5.36 & 1.23 & 0.14951 & 6.06363 &\multirow{3}*{42.05}\\
& HPM & 8.31 & 5.37 & 1.68 & 0.18888 & 6.07039 &\\
& BMA & - & - & - & - & 5.98609 & \\
\hline
\multirow{3}*{pRNS (equal time)}& MPM & 7.71 & 5.70 & 1.41 & 0.17207 & 6.22964 &\multirow{3}*{13.87}\\
& HPM & 8.26 & 5.71 & 1.97 & 0.21550 & 6.26212 &\\
& BMA & - & - & - & - & \color{red}\textbf{6.15213} & \\
\hline
\multirow{3}*{pRNS (equal iterations)}& MPM & 4.38 & 9.89 & 2.27 & 0.51740 & 8.62847 &\multirow{3}*{0.27}\\
& HPM & 5.47 & 9.92 & 3.39 & 0.63110 & 8.84104 &\\
& BMA & - & - & - & - & 8.14917 & \\
\hline
\multicolumn{2}{c|}{EMVS}  & 7.37 & 6.58 & 1.95 & 0.25416 & 6.73390 & 18.91\\
\multicolumn{2}{c|}{Lasso} & 26.29 & 5.08 & 19.37 & 0.69542 & 6.51992 & 2.27\\
\multicolumn{2}{c|}{adaptive lasso} & 13.23 & 5.58 & 6.81 & 0.45352 & 6.28458 & 4.93\\
\multicolumn{2}{c|}{SCAD} & 21.48 & 6.29 & 15.77 & 0.70338 & 6.94900 & 4.61\\
\hline
\hline
\end{tabular}
\end{center}
\caption{Autoregressive correlation with $(n,p,p_0)=(100,1000,12)$}
\end{table}

\begin{table}[H]
\begin{center}
\scriptsize
\begin{tabular}{c c | c c c c c c}
\hline
\hline
\multicolumn{8}{c}{small correlation}  \\
& & \textit{Average model size} & \textit{FN} & \textit{FP} & \textit{FDR} & $\mathbf{\|\hat{\beta}-\beta\|_2}$  & \textit{Runtime}\\
\hline
\multirow{3}*{ada-pMTM}& MPM & 8.14 & 0.00 & 0.14 & 0.01511 & \textbf{0.36531} &\multirow{3}*{14.57}\\
& HPM & 8.09 & 0.00 & 0.09 & 0.01000 & \color{red}\textbf{0.34287} &\\
& BMA & - & - & - & - & 0.42152 & \\
\hline
\multirow{3}*{pMTM}& MPM & 8.08 & 0.12 & 0.20 & 0.02039 & 0.53798 &\multirow{3}*{15.10}\\
& HPM & 8.20 & 0.06 & 0.26 & 0.02622 & 0.48917 &\\
& BMA & - & - & - & - & 0.68383 &\\
\hline
\multirow{3}*{pRNS (equal evaluations)}& MPM & 8.35 & 0.00 & 0.35 & 0.03584 & 0.44451 &\multirow{3}*{40.72}\\
& HPM & 8.58 & 0.00 & 0.58 & 0.05574 & 0.49687 &\\
& BMA & - & - & - & - & 0.47022 & \\
\hline
\multirow{3}*{pRNS (equal time)}& MPM & 8.36 & 0.00 & 0.36 & 0.03717 & 0.46472 &\multirow{3}*{14.70}\\
& HPM & 8.48 & 0.00 & 0.48 & 0.04929 & 0.48844 &\\
& BMA & - & - & - & - & 0.48073 & \\
\hline
\multirow{3}*{pRNS (equal iterations)}& MPM & 1.90 & 6.32 & 0.22 & 0.10167 & 3.91957 &\multirow{3}*{0.23}\\
& HPM & 1.98 & 6.32 & 0.30 & 0.10804 & 3.97245 &\\
& BMA & - & - & - & - & 3.74767 & \\
\hline
\multicolumn{2}{c|}{EMVS}  & 8.09 & 0.05 & 0.14 & 0.01511 & 5.37749 & 12.57\\
\multicolumn{2}{c|}{Lasso} & 37.51 & 0.00 & 29.51 & 0.76589 & 1.61847 & 1.92\\
\multicolumn{2}{c|}{adaptive lasso} & 10.65 & 0.00 & 2.65 & 0.20397 & 0.50107 & 4.09\\
\multicolumn{2}{c|}{SCAD} & 14.54 & 0.00 & 6.54 & 0.27504 & 0.40595 & 3.50\\
\hline
\hline
\multicolumn{8}{c}{moderate correlation}  \\
& & \textit{Average model size} & \textit{FN} & \textit{FP} & \textit{FDR} & $\mathbf{\|\hat{\beta}-\beta\|_2}$  & \textit{Runtime}\\
\hline
\multirow{3}*{ada-pMTM}& MPM & 8.19 & 0.09 & 0.28 & 0.02999 & \textbf{0.47587} &\multirow{3}*{14.39}\\
& HPM & 8.19 & 0.09 & 0.28 & 0.02977 & \color{red}\textbf{0.47389} &\\
& BMA & - & - & - & - & 0.49938 & \\
\hline
\multirow{3}*{pMTM}& MPM & 7.76 & 0.85 & 0.61 & 0.08053 & 1.18489 &\multirow{3}*{14.88}\\
& HPM & 7.71 & 0.85 & 0.56 & 0.07571 & 1.06106 &\\
& BMA & - & - & - & - & 1.27868 &\\
\hline
\multirow{3}*{pRNS (equal evaluations)}& MPM & 8.47 & 0.00 & 0.47 & 0.04796 & 0.48850 &\multirow{3}*{38.86}\\
& HPM & 8.64 & 0.00 & 0.64 & 0.06330 & 0.53733 &\\
& BMA & - & - & - & - & 0.52055 & \\
\hline
\multirow{3}*{pRNS (equal time)}& MPM & 8.48 & 0.04 & 0.52 & 0.05139 & 0.52188 &\multirow{3}*{15.22}\\
& HPM & 8.65 & 0.04 & 0.69 & 0.06679 & 0.55481 &\\
& BMA & - & - & - & - & 0.54350 & \\
\hline
\multirow{3}*{pRNS (equal iterations)}& MPM & 2.29 & 6.25 & 0.54 & 0.20833 & 3.94940 &\multirow{3}*{0.22}\\
& HPM & 2.44 & 6.19 & 0.63 & 0.22275 & 3.94975 &\\
& BMA & - & - & - & - & 3.83011 & \\
\hline
\multicolumn{2}{c|}{EMVS}  & 7.60 & 1.22 & 0.82 & 0.11750 & 5.68208 & 13.15\\
\multicolumn{2}{c|}{Lasso} & 38.81 & 0.83 & 31.64 & 0.78922 & 2.88405 & 2.09\\
\multicolumn{2}{c|}{adaptive lasso} & 13.34 & 0.55 & 5.89 & 0.36971 & 1.46768 & 4.04\\
\multicolumn{2}{c|}{SCAD} & 18.02 & 0.20 & 10.22 & 0.38697 & 0.77203 & 4.60\\
\hline
\hline
\multicolumn{8}{c}{high correlation}  \\
& & \textit{Average model size} & \textit{FN} & \textit{FP} & \textit{FDR} & $\mathbf{\|\hat{\beta}-\beta\|_2}$  & \textit{Runtime}\\
\hline
\multirow{3}*{ada-pMTM}& MPM & 7.46 & 2.79 & 2.25 & 0.30544 & 3.01815 &\multirow{3}*{12.91}\\
& HPM & 7.42 & 2.78 & 2.20 & 0.29978 & 3.00770 &\\
& BMA & - & - & - & - & 3.02119 & \\
\hline
\multirow{3}*{pMTM}& MPM & 7.55 & 3.07 & 2.62 & 0.34795 & 3.27811 &\multirow{3}*{13.03}\\
& HPM & 7.50 & 3.05 & 2.55 & 0.34374 & 3.25323 &\\
& BMA & - & - & - & - & 3.23014 &\\
\hline
\multirow{3}*{pRNS (equal evaluations)}& MPM & 7.78 & 2.07 & 1.85 & 0.23894 & 2.50479 &\multirow{3}*{38.62}\\
& HPM & 7.89 & 2.06 & 1.95 & 0.24684 & 2.51739 &\\
& BMA & - & - & - & - & 2.50086 & \\
\hline
\multirow{3}*{pRNS (equal time)}& MPM & 7.46 & 2.52 & 1.98 & 0.26960 & 2.82655 &\multirow{3}*{13.88}\\
& HPM & 7.47 & 2.50 & 1.97 & 0.26696 & \textbf{2.80090} &\\
& BMA & - & - & - & - & \color{red}\textbf{2.79805} & \\
\hline
\multirow{3}*{pRNS (equal iterations)}& MPM & 3.79 & 5.96 & 1.75 & 0.47052 & 4.28314 &\multirow{3}*{0.27}\\
& HPM & 3.81 & 6.11 & 1.92 & 0.48923 & 4.37975 &\\
& BMA & - & - & - & - & 4.09175 & \\
\hline
\multicolumn{2}{c|}{EMVS}  & 7.99 & 2.85 & 2.84 & 0.35481 & 6.93585 & 10.36\\
\multicolumn{2}{c|}{Lasso} & 25.17 & 3.00 & 20.17 & 0.76385 & 3.67644 & 2.00\\
\multicolumn{2}{c|}{adaptive lasso} & 12.55 & 3.26 & 7.81 & 0.58423 & 3.84908 & 3.90\\
\multicolumn{2}{c|}{SCAD} & 15.76 & 3.01 & 10.77 & 0.58730 & 3.36388 & 3.72\\
\hline
\hline
\end{tabular}
\end{center}
\caption{Group structure with $(n,p,p_0)=(100,1000,8)$}
\end{table}
\subsection{A real data example}
The dataset [\cite{trindade2015}] contains electricity consumption of 370 clients from 2011 to 2014 without missing values. It records consumption every 15 minutes for every client in kW leading to 140,256 covariates. One question of interest is to predict the future electricity load in terms of previous usage. Therefore, the last column is treated as response and we preserve the top 10,000 covariates with large variances to reduce conditional number of the design matrix. Following \cite{wang2016}, the values are further scaled into [0,300]. The dataset is partitioned into training set with the first 200 clients and test set with remaining 170 clients. To reduce stochasticity, each method is run for 10 times except for \textit{EMVS} and \textit{SCAD}. We run \textit{ada-pMTM} using $M=1,000$ for 1000 iterations with first $20\%$ samples as burnin. All other settings keep same as Section \ref{4.2}. Running time, predictive MSE and model size are reported. As presented in Figure.\ref{fig3}, the predictive MSEs given by Bayesian model averaging and highest probability model of \textit{ada-pMTM} and \textit{pMTM} are smaller compared to other methods with competitive model sizes displayed in Figure.\ref{fig4}.
\begin{table}[H]
\begin{center}
\scriptsize
\begin{tabular}{c c c c c c c}
\hline
\textbf{ada-pMTM} & \textbf{pMTM} & \textbf{pRNS} & \textbf{Lasso} & \textbf{adaptive lasso} & \textbf{EMVS} & \textbf{SCAD} \\
\hline
120.56 & 116.43 & 117.82 & 72.39 & 141.91 & 65.76 & 65.36\\
\hline
\end{tabular}
\end{center}
\caption{Running time (secs) for the real example}
\label{table5}
\end{table}

\begin{figure}[H]
\centering
\includegraphics[width=10cm]{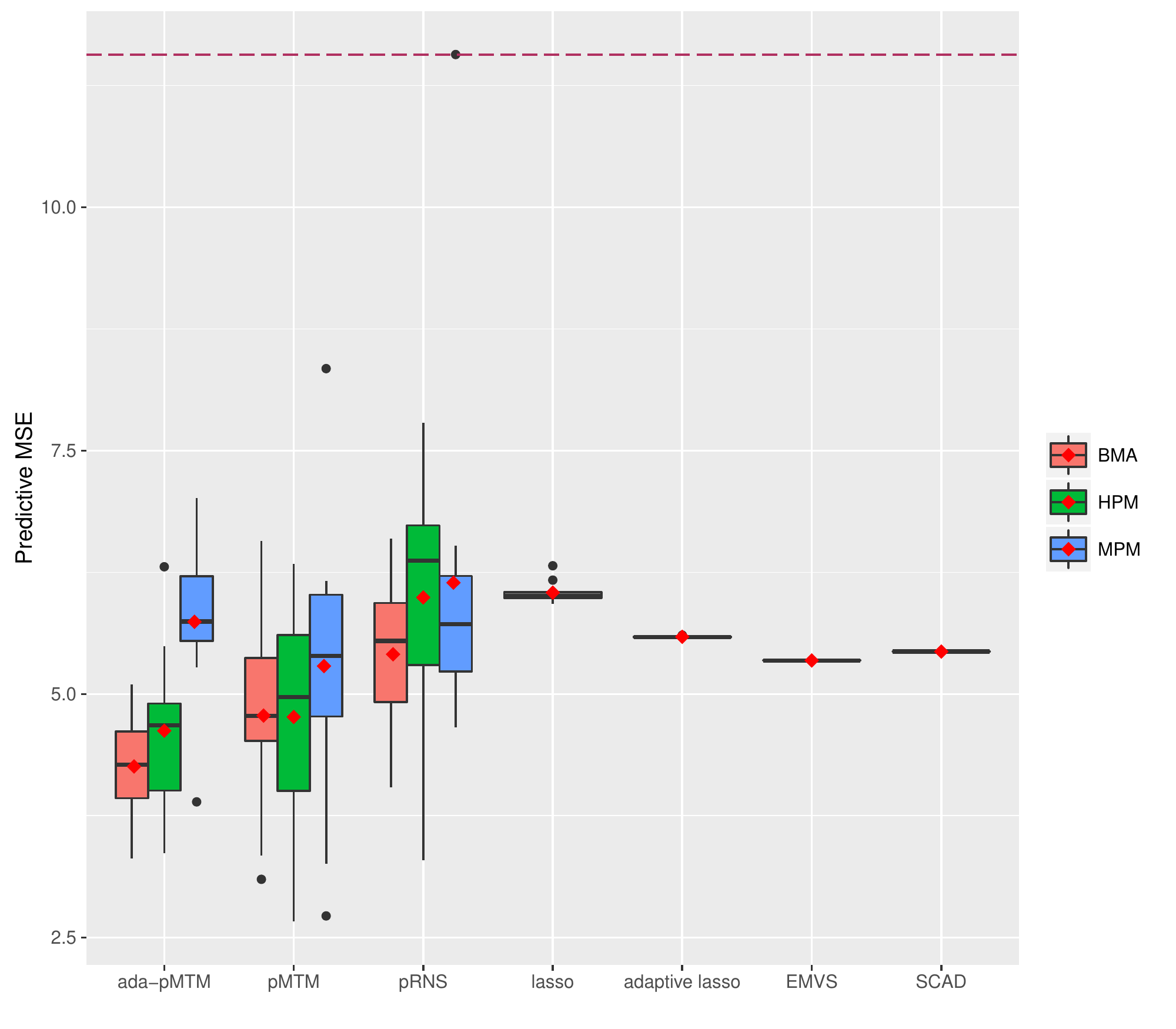}
\caption[]{Predictive MSE for different methods. The MSE using null model is marked as a purple dashed line. Red points represent means of boxes.}
\label{fig3}
\end{figure}
\begin{figure}[H]
\centering
\includegraphics[width=10cm]{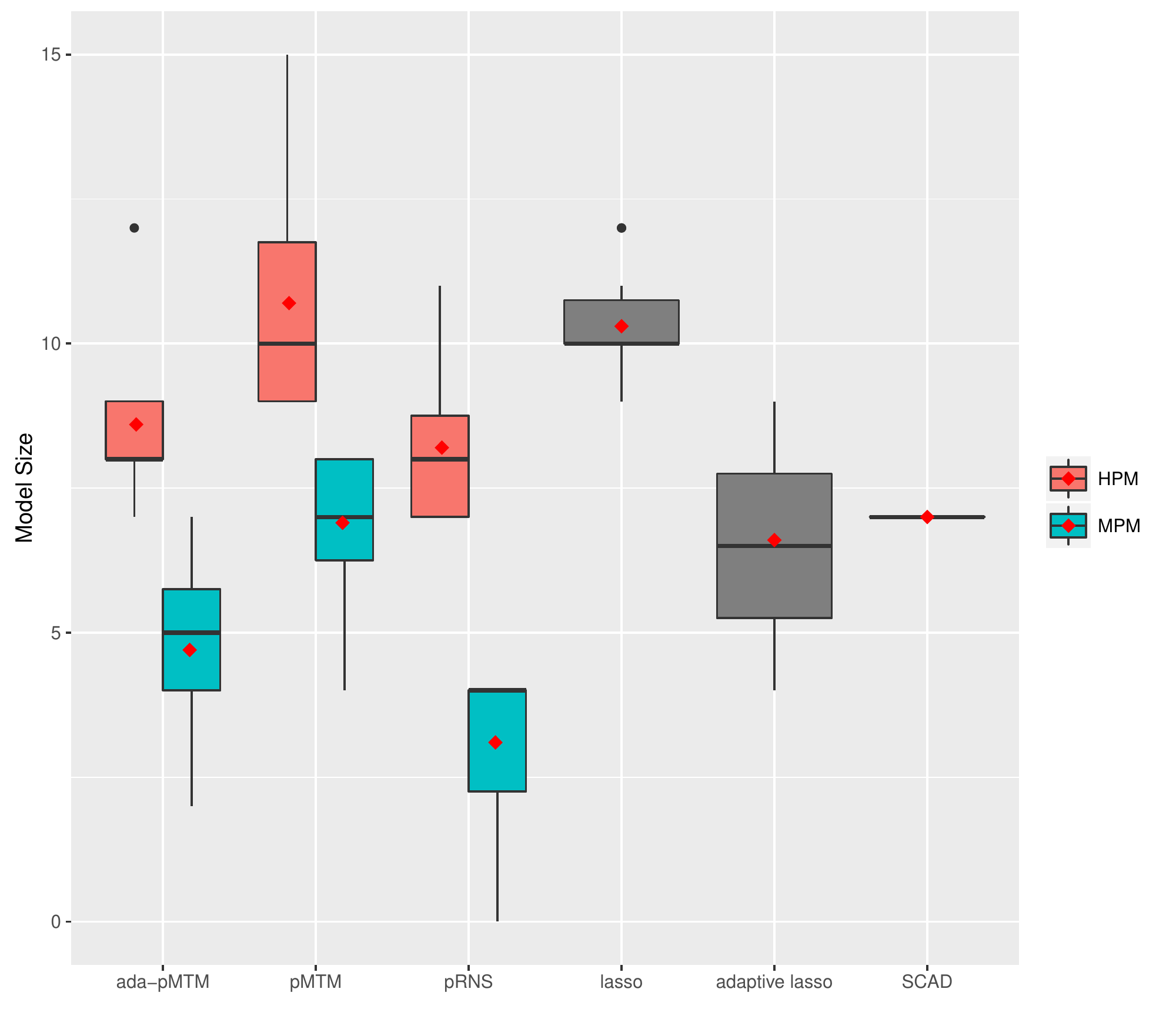}
\caption[]{Model size using different methods for the real data example}
\label{fig4}
\end{figure}
\subsection{Computational efficiency}\label{comeff}
Scalability is another attractiveness of the proposed algorithms. In this section, we compare the computational efficiency of the proposed algorithms run with different number of cores on a simulated dataset with independent design in Section \ref{4.2} for $n=10^3$ and $p=2\times10^4$. \texttt{apply} function is used for \textit{pMTM} and \textit{ada-pMTM} with single core when evaluating marginal likelihoods. \textit{pMTM, ada-pMTM-4, 8} represent \textit{pMTM} or \textit{ada-pMTM} run on 4 or 8 clusters. For parallelization, datasets are first distributed to multiple clusters and then \texttt{parLapply} in \texttt{parallel} package is used. All algorithms are implemented on the same 10 synthetic datasets at each value of $M$. A graph of the mean numbers of evaluations of marginal likelihood within 10 seconds against $M/p$ is provided in Figure.\ref{fig6}. 

The line for \textit{pRNS} is a constant since the algorithm does not involve $M$. \textit{ada-pMTM} needs to update scores for predictors and hence it evaluates less marginal likelihoods than \textit{pMTM}. 2 communications are required at each iteration and hence parallelization with 4 or 8 clusters when $M=p/5$ is not beneficial. When $M$ becomes larger, computing time overwhelms communication time resulting in the dominance of algorithms implemented with 8 clusters. 
\begin{figure}[H]
\centering
\includegraphics[width=10cm]{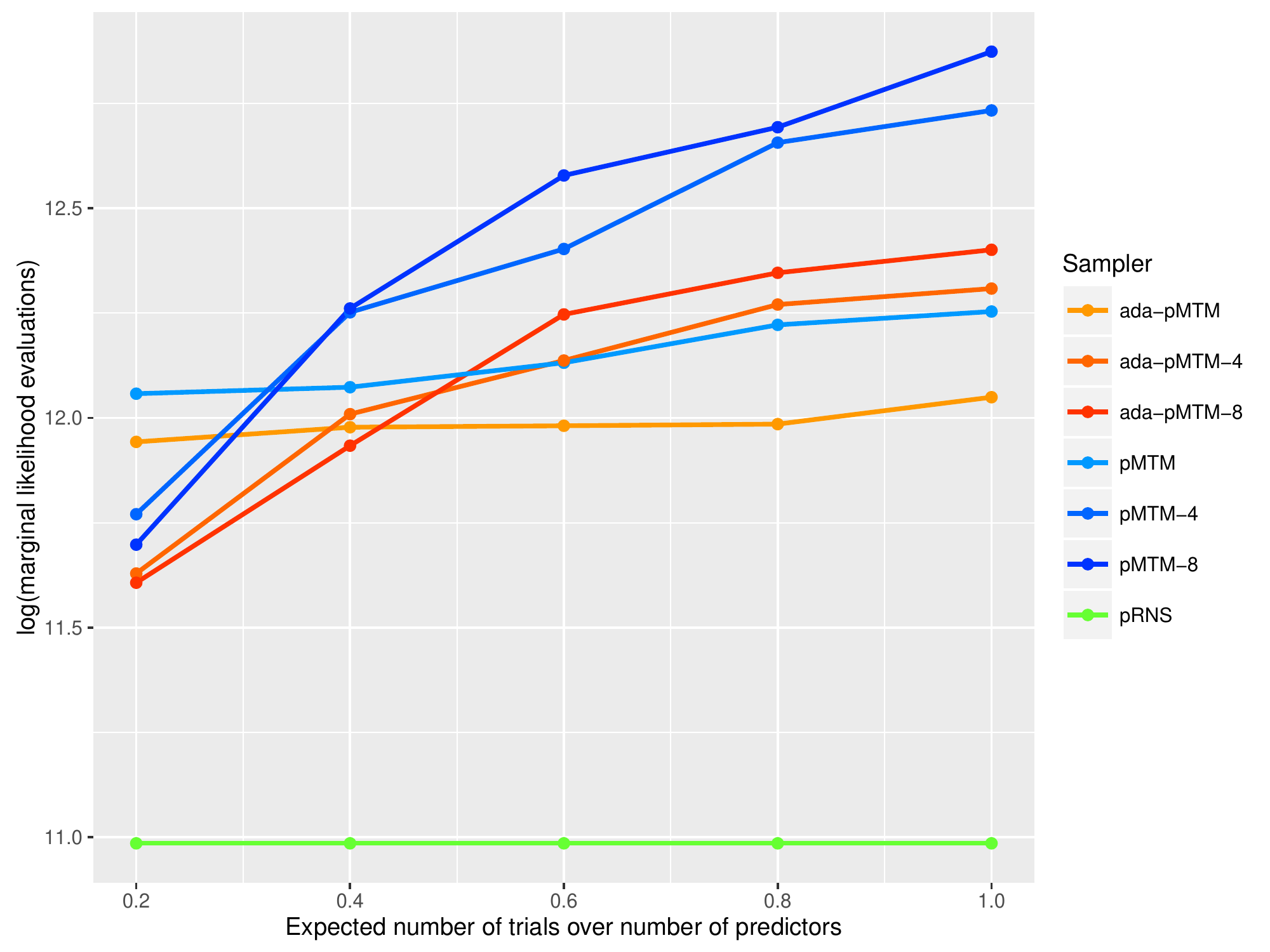}
\caption[]{Logarithm of the mean number of marginal likelihood evaluations within 10 seconds}
\label{fig6}
\end{figure}

\section{Discussion}
\label{sec5}
We propose a paired-move multiple-try Metropolis MCMC sampler for Bayesian variable selection. Extensive simulation studies demonstrate the effectiveness of \textit{pMTM} especially for ``large $p$ small $n$'' scenario. Efficient model space exploration with less computational cost is achieved by incorporating the paired-move and multiple-try strategies. Comparing to \textit{SSS}, a more flexible computational budget can be determined manually based on data and purpose instead of considering all neighborhoods. In this work, the expected computational budget $M$ is specified as $p/10$. However, the optimal choice of $M$ is still not fully explored. Intuitively, the optimal $M$ may depend on dimensions and correlation structure of the design matrix.

Reproducibility is a key issue in scientific research [\cite{peng2011}; \cite{collins2014}; \cite{open2015}]. Research based on statistical computations is expected to be able to be replicated. In the context of inference using MCMC techniques, both of the following two elements are required for reproducibility:
\begin{enumerate}
\item convergence of the Markov chain: To ensure the samples are indeed drawn from the target distribution, we require the chain nearly converging to the equilibrium.
\item enough posterior samples: Bayesian inference is mostly based on posterior samples. Therefore, enough posterior samples drawn from a converged chain are required to make accurate inference.
\end{enumerate}
Considering running the proposed algorithms under fixed running time, chains produced by \textit{pMTM} and \textit{ada-pMTM} can rapidly converge to equilibrium with a small number of posterior samples while \textit{pRNS} can generate a large number of samples but may be stuck in some local modes. Therefore, implementing each of these algorithms in a short period of time may fail to simultaneously satisfy the two requirements. A hybrid algorithm, combining \textit{pRNS} and \textit{ada-pMTM}, that take advantages of both is worthwhile developing. 
 
To facilitate the application of our method to even huge datasets, one may further accelerate \textit{pMTM} by subsampling [\cite{DBLP:conf/icml/BalanCW14}; \cite{quiroz2015}] which is randomly selecting a mini-batch of samples at each iteration for computing marginal likelihoods. Another possible approach is to partition the design matrix first either using sample space partitioning [\cite{wang2014}] or feature space partitioning [\cite{wang2016}] and then apply \textit{pMTM} on each subset of data. 

\bibliographystyle{ba}
\bibliography{sample}


\end{document}